%
%
%
%
%
%
%
%
%
%
%

\documentclass[lettersize,journal]{IEEEtran}
\usepackage{amsmath,amsfonts,amsthm,amssymb,bbm}
\usepackage{newtxmath} 
\usepackage{extarrows} 
\usepackage{algorithmicx}
\usepackage{algorithm}
\usepackage{array}
\usepackage[caption=false,font=footnotesize,labelfont=rm,textfont=rm]{subfig}
\usepackage{textcomp}
\usepackage{stfloats}
\usepackage{url}
\usepackage{verbatim}
\usepackage{graphicx}
\usepackage{cite}
\hyphenation{op-tical net-works semi-conduc-tor IEEE-Xplore}
\usepackage{hyperref} 
\hypersetup{
	hidelinks,
	colorlinks=true, 	
	linkcolor=black,
	citecolor=black
}

\usepackage{color}
\usepackage{color,xcolor}
\newtheorem{lemma}{\textbf{\textit{Lemma}}}
\newtheorem{theorem}{\textbf{\textit{Theorem}}}

\usepackage{titlesec}
\usepackage{indentfirst} 

\begin{document}
	
%
\title{Jamming Detection and Channel Estimation for Spatially Correlated Beamspace Massive MIMO}
%
%
%

\author{ Pengguang Du,~\IEEEmembership{Graduate Student Member,~IEEE,} Cheng Zhang,~\IEEEmembership{Member,~IEEE,}  Yindi Jing,~\IEEEmembership{Senior Member,~IEEE,}\\Chao Fang, Zhilei Zhang, Yongming Huang,~\IEEEmembership{Senior Member,~IEEE}
\thanks{Pengguang Du, Cheng Zhang, Chao Fang and Yongming Huang are with the National Mobile Communication Research Laboratory, and the School of Information Science and Engineering, Southeast University, Nanjing 210096, China, and also with the Purple Mountain Laboratories, Nanjing 211111, China (e-mail: pgdu@seu.edu.cn; zhangcheng\_seu@seu.edu.cn; fangchao@seu.edu.cn; huangym@seu.edu.cn). \emph{(Corresponding authors: Yongming Huang;	Cheng Zhang.)}}
\thanks{Yindi Jing is with the Department of Electrical and Computer Engineering, University of Alberta, Edmonton, AB T6G 1H9, Canada (e-mail: yindi@ualberta.ca).}
\thanks{Zhilei Zhang is with the Purple Mountain Laboratories, Nanjing 211111, China
(e-mail: zhangzhilei@pmlabs.com.cn).}
}

\markboth{}%
{Submitted paper}
\maketitle 

\titlespacing*{\subsection}{0pt}{0pt plus 1pt minus 1pt}{0.5pt plus 1pt minus 1pt}

\begin{abstract}
 In this paper, we investigate the problem of jamming detection and channel estimation during multi-user uplink beam training  under random pilot jamming attacks in beamspace massive multi-input-multi-output (MIMO) systems. 
 For jamming detection, we distinguish  the signals from the jammer and the user by projecting the observation signals onto the pilot space. By using the multiple projected observation vectors corresponding to the unused pilots, we propose a jamming detection scheme based on the locally most powerful test (LMPT) for systems with general channel conditions. Analytical expressions for the probability of detection and false alarms are derived using the second-order statistics and likelihood functions of the projected observation vectors.
 For the detected jammer along with users, we propose a two-step minimum mean square error (MMSE) channel estimation  using the projected observation vectors.
As a part of the channel estimation, we develop schemes to estimate the norm and the phase of the inner-product of the legitimate pilot vector and the random jamming pilot vector, which  can be obtained using linear MMSE estimation and a bilinear form of the multiple projected observation vectors.
From simulations under different system parameters, we observe that the proposed technique improves the detection probability by 32.22\% compared to the baseline at  medium channel correlation level, and the channel estimation achieves a mean square error of -15.93$\,\text{dB}$.

\end{abstract}

\begin{IEEEkeywords}
Beamspace massive MIMO, beam trainning, jamming detection, channel estimation, channel correlation

\end{IEEEkeywords}

%
\IEEEpeerreviewmaketitle


\vspace{1em}
\section{Introduction}
\IEEEPARstart{B}{eamspace} massive multiple-input multiple-output (MIMO) based on circuit-type beamforming networks or quasi-optical lens can provide significant performance gains to enable future wireless communications \cite{wu2023simultaneous}.
However, the exposure of the physical channels in wireless communications makes it easy for non-cooperative jammers to launch active attacks to cause pilot contamination and degrade the spectral efficiency (SE) of the system \cite{hoang2018cell}.
Jamming detection and the channel state information (CSI) acquisition of the jammer and legitimate users are the top priorities in improving system security and guaranteeing its robust performance \cite{kapetanovic2013detection,tugnait2015self,vinogradova2016detection,akhlaghpasand2017jamming,qi2024anti}.
Jamming suppression techniques, e.g., anti-jamming oriented hybrid beamforming \cite{qi2024anti} and time-frequency resource allocation \cite{pirayesh2022jamming}, can be further employed to  mitigate jamming effects. 

Various jamming detection schemes, including uplink training-based one-way and uplink-downlink training-based two-way schemes, have been proposed for conventional antenna-domain massive MIMO systems.
The works on uplink training assume that the system uses different pilot transmission strategies, making it impossible for the jammer to know the user's pilot and only allowing the transmission of random sequences.
Specifically, the pilots are transmitted by superimposing a random sequence on a publicly known training sequence \cite{kapetanovic2013detection,tugnait2015self}.
The jammer is then detected using an inner-product-phase-based method and a minimum description length (MDL) method, respectively.
In \cite{vinogradova2016detection} and \cite{akhlaghpasand2017jamming}, a pilot hopping technique is used to prevent the jammer from estimating the user's pilots. An eigenvalue-based method and a generalized likelihood ratio test (GLRT) are then proposed to detect the jammer.
However, without changing the design of the pilot symbols and transmission protocols, a pilot spoofing attack can be launched, i.e., the jammer transmits the same pilot signals as those of the intended user. 
Here, the jammer disguises itself as a legitimate user, making it difficult for the  base station (BS) to detect the attack.
This problem has motivated research on uplink-downlink training   \cite{xiong2015energy,xiong2016secure,xu2019detection}. The training framework 
consists of the following phases:  the uplink pilot transmission, the downlink broadcasting of the modulated symbols, and the demodulation and detection by the user.
In the schemes proposed in \cite{xiong2015energy,xiong2016secure,xu2019detection}, if the downlink beam direction deviates from the user due to contaminated uplink CSI estimation,  the received signal strength (RSS) diminishes.
 Based on this observation, the following methods have been proposed:   an energy ratio detector \cite{xiong2015energy}, a two-way estimation detector \cite{xiong2016secure}, and a pilot retransmission detector \cite{xu2019detection}.

The reliable acquisition of the CSI of both  jammers and users is a fundamental prerequisite for the implementation of existing anti-jamming methods \cite{qi2024anti,zeng2017enabling}. Many researchers have developed schemes to estimate CSI under jamming conditions.
For example, a user channel estimator  using least squares or minimum-mean-square-error (MMSE) criteria was  proposed in \cite{gulgun2020massive}. 
Considering the stochastic nature of jamming pilots,  the authors of \cite{akhlaghpasand2019jamming} and \cite{do2017jamming} employed an unused pilot sequence to estimate the jamming channel and proposed an MMSE jamming suppression channel estimator based on  the asymptotic behavior of massive MIMO for legitimate users.
To maximize the system's SE under jamming attacks, an optimal linear estimator for user channels was devised in \cite{akhlaghpasand2020jamming}, leveraging statistical knowledge of the jammer's channel.
To improve estimation quality, a semi-blind estimation method was proposed in \cite{yang2022jamming}.  This method uses pilots and data symbols to jointly estimate the legitimate channel and the jamming-plus-noise (JPN) correlation matrix through an expectation-maximization algorithm.
By exploring angular domain knowledge, the work in \cite{bagherinejad2021direction} projected received signals onto the orthogonal complement of the jammer's angular subspace and estimated the user's path gains and CSI using the linear MMSE approch.

The aforementioned detection and estimation works mainly focus on conventional antenna-space MIMO systems, and the extension to beamspace systems using hybrid digital-analog hardware architectures is a significant challenge. 
This challenge arises because only limited compressed beam-domain information from the radio-frequency (RF) chain is available for physical-layer signal processing.
Similar to antenna-domain MIMO, the communication phase in beam-domain MIMO communications most vulnerable to jamming attacks is the pilot transmission, also known as beam training.
As described in \cite{kim2021adversarial,darsena2022anti}, a jammer can provoke a beam failure event by sending high-power RF perturbation to interfere with the RSS at the BS.
To address this problem, the papers \cite{darsena2022anti,hou2023music,dinh2023defensive} have proposed model-based and deep learning-driven jamming detection schemes.
In the model-based scheme, \cite{darsena2022anti} broadcasts training sequences through a randomized probing scheme. 
The power of the JPN is then computed using pilot projection to enable jamming detection.
The work in \cite{hou2023music} proposed a method for jamming detection and source localization based on the multiple signal classification spectrum.
By integrating deep learning,  an autoencoder-based jamming detection and mitigation defense strategy has been designed in \cite{dinh2023defensive}.
In regard to channel estimation, the aforementioned schemes enable the estimation of the beam-domain channel direction.
However, the limited beam-domain channel information hinders the development of refined beam designs, making it challenging to explore and enhance spatial diversity gain.

As previously mentioned, the antenna-domain CSI estimation for users and jammers in beamspace  massive MIMO systems has not yet been studied.
Additionally, most existing studies assume independent fading channels, overlooking the spatially correlated fading caused by small antenna spacing \cite{li2022spatial}.
To address these limitations, this paper proposes channel-statistics-assisted jamming detection scheme and MMSE-based two-step estimation methods for the antenna-domain channels of both the jammer and users for beamspace massive MIMO systems.
The key contributions  are outlined below.
\begin{itemize}
	\item We propose a multi-user uplink beam training model under random pilot jamming attack that contemplates beamspace massive MIMO systems under spatially correlated channel fading, and  utilizes projected observation vectors onto the used and unused pilot sequences to distinct the signals from legitimate users and the jammer.
	\item The hypothesis testing problem regarding the existence of jamming is formulated based on multiple projected observation vectors corresponding to the unused pilots. 
	We derive the likelihood functions corresponding to both the original and alternative hypotheses. 
	By treating the problem as a one-sided test, we propose a locally most powerful test (LMPT) based detection scheme for the general channel scenario.
	It utilizes the sum of the gradients of the projected observation vectors with respect to  the argument to be detected to enable jamming detection. By utilizing the full eigenspace information of the jammer channel, we obtain a tractable expression for the  statistic.	
	Further, we analyze its complexity and performance.
	 
	
	\item The unknown nature of the jamming pilot introduces unknown parameters in the channel estimation: the inner-products of the jamming and system pilots. 	
	Therefore, we propose an MMSE-based two-step channel estimation scheme. 
	First, we design a linear MMSE estimator for the norms of the inner-products, and an estimator for their phase differences  based on a bilinear form of the projected observation vectors.
	Then, we estimate the jamming and user channels  using the projected observation vectors and the estimated inner-products. 
	
	\item The simulation results validate our derived performance expressions for the LMPT-based jamming detector and the effectiveness of the proposed channel estimation method.
	Our proposed detector has better detection performance at high or medium channel correlation levels compared to the GLRT-based detector.
	Specifically, the detector improves the detection probability by 32.22\% 
	under the exponential correlation model when the channel correlation coefficient is 0.5.
	In addition, the performance of the proposed estimation method is comparatively analyzed under strong and weak channel correlation scenarios. The mean square error (MSE) of channel estimate  achieves -15.93$\,\text{dB}$ when the channel correlation coefficient is 0.8, with better performance at the channel correlation coefficient of 0.2.

\end{itemize}

The remainder of this paper is organized as follows. The system model and beam training model under jamming attack are presented in Section \ref{System Model}. The channel-statistics-assisted jamming detection scheme is detailed in Section \ref{Second-Order Statistics Based Jamming Detection Scheme}. This section includes the LMPT-based jamming detection scheme and its performance analysis. Section \ref{MMSE-based Spatial-Domain CE} presents  the two-step channel estimation approach for the jamming and user channels, and the estimation method for the jamming pilot's inner-product. Section \ref{Simulation} provides detailed simulations and discussions.  Finally, Section \ref{Conclusion} offers concluding remarks.

\emph{Notation}: $\mathbb{{C}}^{m \times n}$ indicates 
the set of $m \times n$ complex matrices. 
Bold symbols denote matrices and vectors.
$\mathbf{I}_{n}$ denotes the $n$-by-$n$ identity matrix. 
${\bf{x}} \sim \mathcal{{CN}}\left(\text{\ensuremath{\boldsymbol{\mu}}},\boldsymbol{\Sigma}\right)$ means that the random vector ${\bf{x}}$ follows the circularly symmetric complex Gaussian distribution with mean vector $\boldsymbol{\mu}$ and covariance matrix $\boldsymbol{\Sigma}$.
$\left( \cdot \right)^{\rm{T}}$, 
$\left( \cdot \right) ^{\rm{H}}$, $\left( \cdot \right) ^{*}$,
 ${\rm tr}\left( \cdot \right)  $ and $|\cdot|$ refer to the transpose,
conjugate transpose, complex conjugate, trace and determinant. 
Also, $\left\Vert \mathbf{a}\right\Vert $ and
$\left\Vert \mathbf{A}\right\Vert _{\rm{F}} $ denote the Euclidean norm of $\mathbf{a}$ and
the Frobenius norm of $\mathbf{A}$, respectively. 
$\rm{diag}(\mathbf{a})$ is the diagonal matrix whose diagonal entries are elements of vector $\mathbf{a}$.
$\mathbb{E}\left\{ \cdot\right\} $ 
is the expectation operator.  
The Kronecker product is denoted by the symbol $\otimes$.
Finally, $\chi _b^2\left( a \right)$ denotes a chi-squared distribution with $b$ degrees-of-freedom and the noncentrality parameter $a$.

\section{System Model and Beam Training Model Under Jamming Attack }
\label{System Model}

\par We consider a beamspace massive MIMO uplink system,  where the BS  with $M_{\rm{BS}}$ antennas serves $K$ legitimate users each with $M_{\rm{UE}}$ antennas via orthogonal frequency division multiple  access (OFDMA) \cite{zhang2023incremental}. 
A jammer equipped with $M_{\rm{JM}}$ antennas is within the coverage area, and attacks the uplink training of the  system by sending jamming signals.
The BS, users and the jammer are considered to use a single RF chain architecture with only analog beamforming capability.\footnote{In a single RF chain architecture, users can access the BS via frequency division multiplexing, and this can easily be extended to space division multiplexing scenarios in a hybrid multi-RF chain architecture.}

\vspace{0.5em}
\subsection{Pilot Transmission Model and Channel Model }

During the uplink training phase, the legitimate users send pilot sequences to the BS for channel estimation. 
Denote the set of legitimate pilots in the system as ${\boldsymbol{\Phi}}=\left\lbrace {\boldsymbol{\phi}}_{1},\ldots,{\boldsymbol{\phi}}_{\tau}\right\rbrace$ where ${\boldsymbol{\phi}}_{i} \in \mathbb{C}^{\tau \times 1}$, and $\tau$ is the pilot length.
 The pilot sequences are orthogonal to each other and each has the unit norm, i.e., $||{\boldsymbol{\phi}}_{i}||=1, \forall i\in \mathbb{T}= \{1,\ldots,\tau\}$.
Denote the set of users as $\mathbb{K}= \left\lbrace1,\ldots,K \right\rbrace $.  
For simplicity, it is assumed that user $k \in \mathbb{K}$ is assigned subband $k$.
User $k\in \mathbb{K}$ sends the pilot sequence with index ${\iota}_k \in \mathbb{T}$ in the $k$-th subband with a RF analog precoder ${\bf{w}}_{k} = \left[w_{k,1},\ldots,w_{k,M_{\rm{UE}}}\right]^{\rm{T}}\in \mathbb{C}^{M_{\rm{UE}} \times 1}$ satisfying the constant modulus constraint $|{{w}}_{k,i}| = \frac{1}{\sqrt{M_{\rm{UE}}}},i = 1,\ldots,{M_{\rm{UE}}}$, and the BS is assumed to use an analog combiner ${\bf{w}}_{{\rm{RF}}}\in\mathbb{C}^{M_{\rm{BS}} \times 1}$ to observe the received antenna-domain signal.
Without loss of generality, it is assumed that ${\bf{w}}_{{\rm{RF}}}$ is a unit-power vector.
Meanwhile, the jammer sends the jamming pilot sequence ${\boldsymbol{\psi}}_{k}$ in the $k$-th subband, where ${\boldsymbol{\psi}}_{k}$ follows $ \mathcal{CN}\left({\bf{0}},\frac{1}{\tau} {\bf{I}}_{\tau}\right) $.\footnote{\textcolor{black}{ For systems using pilot hopping where the jammer does not have the knowledge of the pilot sequences, the design criterion for jamming pilot is that the inner-product of ${\boldsymbol{\psi}}_{k}$ and each ${\boldsymbol{\phi}}_{i}$ in the pilot
		set is always non-zero. A random jamming pilot sequence satisfing above description divides its average power equally over all pilots, so this is a reasonable mechanism \cite{do2017jamming,akhlaghpasand2019jamming}.}}
Denote ${\bf{w}}_{\rm{JM}}\in \mathbb{C}^{M_{\rm{JM}}\times 1}$ as the analog beamforming vector of the jammer whose elements satisfy the constant modulus constraint.
Then, the received signal vector at the BS in the $k$-th subband, denoted as ${\bf{y}}_{k} \in \mathbb{C}^{\tau\times 1}$, can be written as the following:
\begin{align}
	\setlength{\abovedisplayskip}{3pt}
	\setlength{\belowdisplayskip}{3pt}
		{\bf{y}}_{k}^{\rm{T}}
		&= {\sqrt{\tau p_t}}{\bf{w}}_{{\rm{RF}}}^{\rm{H}}{\bf{H}}_{k}{\bf{w}}_{k}{\boldsymbol{\phi}}_{{\iota}_k }^{\rm{T}}+ {\sqrt{\tau q_k}}{\bf{w}}_{{\rm{RF}}}^{\rm{H}}{{\bf{H}}^{{\rm{JM}}}_{k}{\bf{w}}_{{\rm{JM}}}}{\boldsymbol{ \psi}}^{\rm{T}}_{k} 
		+{\bf{w}}_{{\rm{RF}}}^{\rm{H}}{\bf{N}}_{k},
		\vspace{-0.8em}
\end{align}
where $p_t$ and $q_k$ are the uplink pilot powers per
symbol of the legitimate users and the jammer, respectively.
${\bf{H}}_{k}\in\mathbb{C}^{ M_{\rm{BS}} \times M_{\rm{UE}}}$ is the uplink antenna-domain channel matrix from user $k$ to the BS, and ${\bf{H}}^{\rm{JM}}_{k}\in\mathbb{C}^{ M_{\rm{BS}} \times M_{\rm{JM}}}$ is the antenna-domain channel matrix from the jammer to the BS in the $k$-th subband.
${\bf{N}}_{k}\in\mathbb{C}^{ M_{\rm{BS}} \times \tau}$ denotes the uplink additive noise matrix received in the $k$-th subband by the BS. The elements of ${\bf{N}}_{k}$ are assumed to be independent and identically distributed (i.i.d.) following  $\mathcal{CN}\left(0,\sigma^2\right)$, and $\sigma^2$ denotes the noise variance.

The BS projects ${\bf{y}}_{k}$ onto the $i$-th pilot sequence to get
\begin{align} \label{general projection}
	{{{y}}}_{k,i}= {\boldsymbol{\phi}}^{\rm{T}}_{i }{\bf{y}}_{k}^{\rm{*}}&=  {\sqrt{\tau p_t}}1_{i={\iota}_k} {\bf{w}}^{\rm{H}}_{k}{\bf{H}}_{k}^{\rm{H}}{\bf{w}}_{{\rm{RF}}}
		+ {\sqrt{\tau q_k}}{\boldsymbol{\phi}}^{\rm{T}}_{i }{\boldsymbol{ \psi}}^{*}_{k}
		\nonumber
		\\&\hspace{2cm}\times{\bf{w}}^{\rm{H}}_{{\rm{JM}}}{\bf{H}}^{\rm{JM},H}_{k}{\bf{w}}_{{\rm{RF}}} +{\boldsymbol{\phi}}^{\rm{T}}_{i}{\bf{N}}^{\rm{H}}_{k}{\bf{w}}_{{\rm{RF}}},
\end{align}
where $1_{i={\iota}_k} $ is an indicator function that $1_{i={\iota}_k}  = 1$ if $i={\iota}_k$, and $1_{i={\iota}_k}  = 0$ if $i \ne {\iota}_k$.
Define 
\begin{align}
\alpha_{k,i} \triangleq {\boldsymbol{\phi}}^{\rm{T}}_{i}{\boldsymbol{ \psi}}^{*}_{k},
\end{align}
which is the inner-product of the $i$-th legitimate pilot vector and the random jamming pilot vector in the $k$-th subband. 
We define ${{\bf{h}}^{{\rm{JM}}}_{k}} = {{\bf{H}}^{\rm{JM}}_{k}}{\bf{w}}_{{\rm{JM}}}$, which is  the equivalent jammer channel.  Further define ${\bar{\bf{n}}}_{k} ={\bf{N}}^{\rm{H}}_{k}{\bf{w}}_{{\rm{RF}}}$, and $n_{k,i} ={\boldsymbol{\phi}}^{\rm{T}}_{i}{\bar{\bf{n}}}_k$ which is the equivalent noise.
The projected signal can be rewritten as
\begin{align} \label{general projection2}
	\vspace{-0.4em}
	\setlength{\abovedisplayskip}{2pt}
	\setlength{\belowdisplayskip}{10pt}
	&{{{y}}}_{k,i}= {\sqrt{\tau p_t}}1_{i={\iota}_k} \left ({{\bf{w}}}^{\rm{H}}_{k}\otimes{{\bf{w}}}^{\rm{T}}_{{\rm{RF}}}\right) {{\bf{h}}}^{*}_{k} 
	+ {\sqrt{\tau q_k}}\alpha_{k,i}{{\bf{w}}}^{\rm{T}}_{{\rm{RF}}} {{\bf{h}}^{{\rm{JM}},*}_{k}}+n_{k,i}, 
	\vspace{-0.4em}
\end{align}
where ${{\bf{h}}}_{k} = {\rm{vec}}\left( {{\bf{H}}}_{k}\right) $ is the vectorized form of the channel. It can be shown that  ${\bar{\bf{n}}}_{k} \sim \mathcal{CN}\left(0,\sigma^2 {\bf{I}}_{\tau}\right)$ and $n_{k,i} \sim \mathcal{CN}\left(0,\sigma^2\right)$.
For the equivalent noise projected onto the different pilots, we have $ \mathbb{E}\left[n_{k,i} n^{*}_{k,j}  \right] = {\boldsymbol{\phi}}^{\rm{T}}_{i}\mathbb{E}\left[{\bar{\bf{n} }}_{k} {\bar{\bf{n} }}^{\rm{H}}_{k}  \right] {\boldsymbol{\phi}}^{*}_{j} = 0$, $i\ne j$, and thus $n_{k,i}$ is independent for different $i$.
A tractable way to model spatially correlated channels is the correlated Rayleigh fading model:  
${{\bf{h}}}_{k}\sim\mathcal{CN}\left( {\bf{0}},{\bf{R}}_k \right) $, where ${\bf{R}}_{k} $ is the channel covariance matrix. 
For the equivalent jammer channel ${{\bf{h}}}^{{\rm{JM}}}_{k}$,
we denote its channel covariance matrix as  ${\bf{R}}_{{\rm{JM}},k}$, then
${{\bf{h}}}^{{\rm{JM}}}_{k}\sim\mathcal{CN}\left( {\bf{0}},{\bf{R}}_{{\rm{JM}},k} \right)$ \cite{sanguinetti2019toward}.


\begin{figure}[!t]
	\centering
	\includegraphics[scale=0.44]{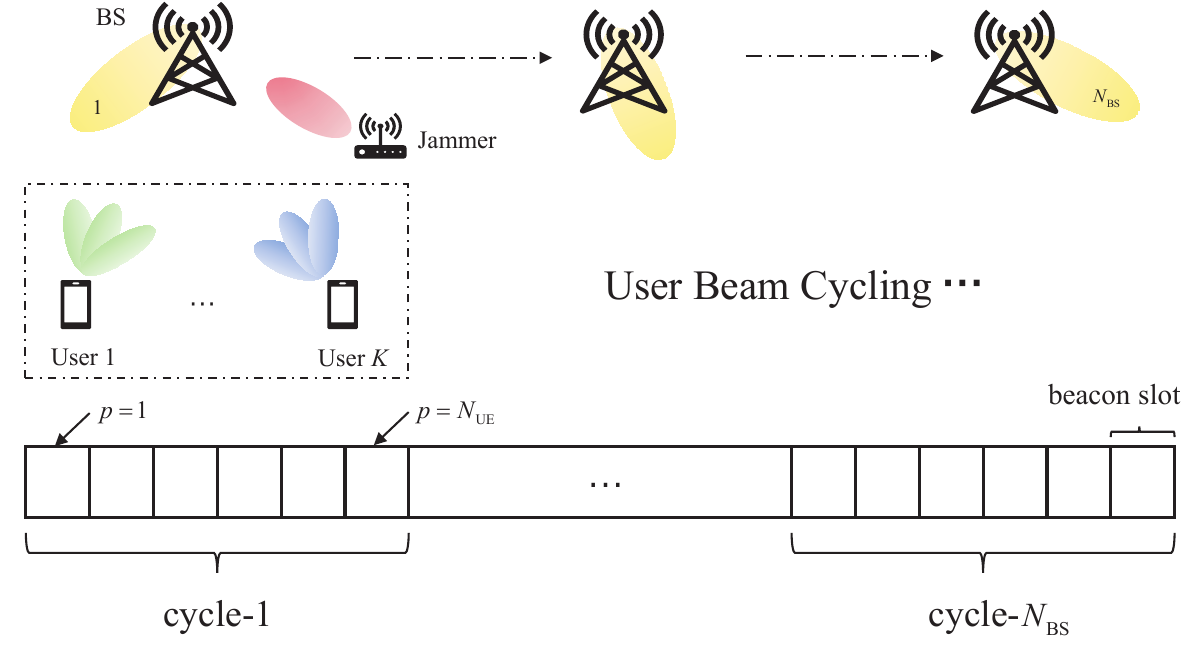}
	\vspace{-0.6cm}
	\caption{Multi-user beam cycle training procedure under jamming attacks.}
	\label{fig0}
\end{figure}

\vspace{0.5em}
\subsection{Beam Training Model}

As mentioned above, the BS's combiner ${\bf{w}}_{{\rm{RF}}}$ and the user's precoder ${\bf{w}}_{k}$ are used over one beacon slot, which has length $\tau$ for the transmission of one pilot sequence.
Further, the analog beam codebooks for the BS and users are defined as $\mathcal{F}_{\rm{BS}} = \left\lbrace {\bf{w}}_{{\rm{RF}},1},\ldots,{\bf{w}}_{{\rm{RF}},N_{\rm{BS}}}\right\rbrace $ and $\mathcal{F}_k =\left\lbrace {\bf{w}}_{k,1},\ldots, {\bf{w}}_{k,N_{\rm{UE}}}\right\rbrace$,  where $N_{\rm{BS}}$ and $N_{\rm{UE}}$ denote the sizes of codebooks respectively. 
Within an arbitrary beacon slot, the BS and users select a set of beam pairs in the pre-designed codebook for pilot transmission.
It is assumed that the precoder of the jammer remains fixed during the training.
As illustrated in Fig. \ref{fig0}, we adopt a beam cycling scheme \cite{lim2020efficient} with $N_{\rm{BS}}$ cycles, where in each cycle, the user sequentially transmits $N_{\rm{UE}}$ beams in the codebook. The number of beam slots of the training phase is thus $N_{\rm{b}} = N_{\rm{BS}} N_{\rm{UE}}$.


Specifically, within  cycle $q\in \left\lbrace 1,\ldots,N_{\rm{BS}}\right\rbrace $, user $k$ sends length-$\tau$ pilot signals over $N_{\rm{UE}}$ beacon slots through the beams in codebook $\mathcal{F}_k$, and the BS  receives signals with the beam ${\bf{w}}_{{\rm{RF}},q}$.
Denote the vector of the projected signals over the pilot sequence ${\boldsymbol{\phi}}_{i}$  in the $k$-th subband collected at the BS for the $q$-th beam cycle as  ${\bar{\bf{y}}}_{k,i,q}\in \mathbb{C}^{N_{\rm{UE}}\times 1}$. From Eq. \eqref{general projection2}, ${\bar{\bf{y}}}_{k,i,q}$
can be expressed as
\begin{equation}\begin{aligned} \label{projection q-cycle}
		\setlength{\abovedisplayskip}{3pt}
		\setlength{\belowdisplayskip}{3pt}
		{\bar{\bf{y}}}_{k,i,q}
		&= {\sqrt{\tau p_t}}1_{i={\iota}_k}{\bf{U}}_{k,q}{{\bf{h}}}^{*}_{k} + {\sqrt{\tau q_k}}\alpha_{k,i}{{\bf{U}}}_{{\rm{JM}},q} {{\bf{h}}}^{{\rm{JM}},*}_{k}
			+{\bf{n}}_{k,i,q},
\end{aligned}\end{equation}
where ${\bf{n}}_{k,i,q} = {\boldsymbol{\phi}}^{\rm{T}}_{i}{\bf{N}}^{\rm{H}}_{k}{\bf{w}}_{{\rm{RF}},q}$ is the noise vector whose elements are independent of each other. 
Define ${\tilde{\bf{w}}}_{k,q,p} =  {{\bf{w}}}^{\rm{H}}_{k,p}\otimes{{\bf{w}}}^{\rm{T}}_{{\rm{RF}},q} $ which is the combined transceiver beamforming vector corresponding to the beam pair $\left({\bf{w}}_{{\rm{RF}},q},{\bf{w}}_{k,p} \right) $, where $p$ is the index number of user's beam. Then
  ${\bf{U}}_{k,q}$ $=$ $\left[ {\tilde{\bf{w}}}^{\rm{H}}_{k,q,1},\ldots,{\tilde{\bf{w}}}^{\rm{H}}_{k,q,N_{\rm{UE}}}\right]^{\rm{H}} $, and ${{\bf{U}}}_{{\rm{JM}},q} =  \left[ {{\bf{w}}}_{{\rm{RF}},q} ,\ldots,{{\bf{w}}}_{{\rm{RF}},q}\right] ^{\rm{T}} \in \mathbb{C }^{N_{\rm{UE}} \times M_{\rm{BS}}}$.



By stacking projected observation vectors ${\bar{\bf{y}}}_{k,i,q}$ for $q=1,\ldots,N_{\rm{BS}}$, the received projected signal vector for the whole training phase can be presented as 
${\bar{\bf{y}}}_{k,i}= [{\bar{\bf{y}}}^{\rm{T}}_{k,i,1},\ldots,{\bar{\bf{y}}}^{\rm{T}}_{k,i,N_{\rm{BS}}}]^{\rm{T}} \in \mathbb{C}^{N_{\rm{b}}\times 1} $.
 From Eq. \eqref{projection q-cycle}, we have
\begin{equation}\begin{aligned} \label{high-dim signal vector}
		\setlength{\abovedisplayskip}{1pt}
		\setlength{\belowdisplayskip}{1pt}
		{\bar{\bf{y}}}_{k,i}
		&= 	{\sqrt{\tau p_t}}1_{i={\iota}_k}{\bf{U}}_{k}^{\rm{H}}{{\bf{h}}}^{*}_{k} + {\sqrt{\tau q_k}}\alpha_{k,i}{{\bf{U}}}_{\rm{JM}} ^{\rm{H}}{{\bf{h}}}^{{\rm{JM}},*}_{k}
			+{\bf{n}}_{k,i},
\end{aligned}\end{equation}
where ${\bf{U}}_{k} = [{\bf{U}}^{\rm{H}}_{k,1},\ldots,{\bf{U}}^{\rm{H}}_{k,N_{\rm{BS}}}] \in\mathbb{C}^{M_{\rm{BS}}M_{\rm{UE}} \times N_{\rm{b}}}$  represents the
combined transceiver beamforming matrix, ${\bf{U}}_{\rm{JM}} = [{\bf{U}}^{\rm{H}}_{{\rm{JM}},1},\ldots,{\bf{U}}^{\rm{H}}_{{\rm{JM}},N_{\rm{BS}}}] \in\mathbb{C}^{M_{\rm{BS}}\times N_{\rm{b}}}$, and ${\bf{n}}_{k,i} = [{\bf{n}}^{\rm{T}}_{k,i,1},\ldots,{\bf{n}}^{\rm{T}}_{k,i,N_{\rm{BS}}}]^{\rm{T}}$.
Since multiple received beams are realized at different beacon slots, ${\bf{n}}_{k,i,q}$ is mutually independent for different $q$, and thus we have ${\bf{n}}_{k,i} \sim\mathcal{CN}(0,\sigma^2{\bf{I}}_{N_{\rm{b}}})$.

Without loss of generality, let $\iota_k = 1$, then 
the index of the used pilot is 1 and the index set of the unused pilots is $\vmathbb{t} = \left\lbrace 2,\ldots,\tau\right\rbrace $. From Eq. \eqref{high-dim signal vector}, the projected observation vector corresponding to the used pilot received by the BS is given by
\begin{equation}\begin{aligned} \label{projected signal used}
		{\bar{\bf{y}}}_{k,1 }
		&=  {\sqrt{\tau p_t}}{\bf{U}}_{k}^{\rm{H}}{{\bf{h}}}^{*}_{k} + {\sqrt{\tau q_k}}\alpha_{k,1}{{\bf{U}}}^{\rm{H}}_{\rm{JM}}{{\bf{h}}}^{{\rm{JM}},*}_{k}
		+{\bf{n}}_{k,1},\\
\end{aligned}\end{equation}
and the projected observation vector corresponding to an unused pilot with index $i$ is given by
\begin{equation}\begin{aligned} \label{projected signal}
		{\bar{\bf{y}}}_{k,i}
		&=   {\sqrt{\tau q_k}}\alpha_{k,i}{{\bf{U}}}^{\rm{H}}_{\rm{JM}} {{\bf{h}}}^{{\rm{JM}},*}_{k}
		+{\bf{n}}_{k,i}, i \in \vmathbb{t}. \\
\end{aligned}\end{equation}
In the next section, we investigate the problem of jamming detection by jointly using the multiple projected observation vectors onto unused pilots in Eq. \eqref{projected signal}.

\section{Channel-Statistics-Assisted Jamming Detection Scheme}\label{Second-Order Statistics Based Jamming Detection Scheme}

In this section, we first formulate the hypothesis testing problem for jamming detection based on the multiple projected observation vectors. Then,  a channel-statistics-assisted jamming detection scheme is proposed based on the LMPT. Finally, the performance of the proposed LMPT-based jamming detection scheme is analyzed.

\vspace{0.5em}
\subsection{Formulation of the Hypothesis Testing Problem}

We formulate the jamming detection problem based on projected observation vectors corresponding to the unused pilots
$\left\lbrace  {\bar{\bf{y}}}_{k,i}, i\in \vmathbb{t}\right\rbrace $ in Eq. \eqref{projected signal} since they 
 contain the jammer signals but not user signals.
 The hypothesis test is defined as follows, $\forall k$,
\begin{equation}\begin{aligned} \label{Hypothesis testing}
		&\mathcal{H}_0 :{\bar{\bf{y}}}_{k,i } = {\bf{n}}_{k,i},& i \in \vmathbb{t},\\
		&\mathcal{H}_1 :{\bar{\bf{y}}}_{k,i } = {\sqrt{\tau q_k}}\alpha_{k,i}{{\bf{U}}}^{\rm{H}}_{\rm{JM}} {{\bf{h}}}^{{\rm{JM}},*}_{k}
		+{\bf{n}}_{k,i},&  i \in \vmathbb{t}.\\
\end{aligned}\end{equation}
The hypotheses $\mathcal{H}_0$ and $\mathcal{H}_1$ denote the cases where the jammer is absent and present, respectively.
We construct the vector containing all observed samples as ${{\bar{\bf{y}}}_{k}  } = \left [{{\bar{\bf{y}}}^{\rm{T}}_{k,2 }  },\ldots,{{\bar{\bf{y}}}^{\rm{T}}_{k,\tau }  }  \right ]^{\rm{T}} \in \mathbb{C}^{(\tau-1)N_{\rm{b}}\times 1}$, and denote the covariance matrix of ${{\bar{\bf{y}}}_{k}  }$ as ${{\bf{R}}}_{{\bar{\bf{y}}}_{k }}  = \mathbb{E}\left[ \bar{\mathbf{y}}_k\bar{\mathbf{y}}_k^{\rm{H}}\right] $.
Further define ${\tilde{\bf{h}}}^{{\mathrm{JM}}}_{k} \triangleq {\mathbf{U}}_{{\mathrm{JM}}}^\mathrm{H}\mathbf{h}_{k}^{{\mathrm{JM}},*}$ which is the beam-domain channel of the jammer, and  it can be shown that ${\tilde{\bf{h}}}^{{\mathrm{JM}}}_{k} \sim\mathcal{CN}\left( {\bf{0}} , {\tilde{\bf{R}}}_{{\mathrm{JM}},k} \right) $ where
${\tilde{\bf{R}}}_{{\rm{JM}},k} = {{\bf{U}}}^{\rm{H}}_{\rm{JM}} {{\bf{R}}}^{*}_{{\rm{JM}},k}{{\bf{U}}}_{\rm{JM}}$.
Under $\mathcal{H}_1$, the $i$-th $N_{\rm{b}}\times N_{\rm{b}}$ diagonal block matrix of ${{\bf{R}}}_{{\bar{\bf{y}}}_{k }}$ which is the self-covariance matrix of ${\bar{\bf{y}}}_{k,i }$ can be calculated as the following
\begin{equation}\begin{aligned} \label{self-cov}
		{{\bf{R}}}_{{\bar{\bf{y}}}_{k,i }} &=  \mathbb{E}\left[{{\bar{\bf{y}}}_{k,i }  }^{}{{\bar{\bf{y}}}^{\rm{H}}_{k,i }  }\right] \overset{(a)}{=} 
		{{\tau q_k}} |\alpha_{k,i}|^2 
		\tilde{\mathbf{R}}_{{\rm{JM}},k} + \sigma^2{\bf{I}}_{N_{\rm{b}}},i \in \vmathbb{t},\\
\end{aligned}\end{equation} 
where $(a)$ exploits the property that the jamming channel and the noise are independent of each other.

Denote the $(i, j)$-th $(i \ne j)$ $N_{\rm{b}}\times N_{\rm{b}}$ block matrix of ${{\bf{R}}}_{{\bar{\bf{y}}}_{k }}$ as $	{{\bf{R}}}_{{\bar{\bf{y}}}_{k,i },{\bar{\bf{y}}}_{k,j}}$, which is the mutual covariance matrix of the projected observation vectors ${\bar{\bf{y}}}_{k,i }$ and ${\bar{\bf{y}}}_{k,j }$, i.e., ${{\bf{R}}}_{{\bar{\bf{y}}}_{k,i },{\bar{\bf{y}}}_{k,j}} = \mathbb{E}\left[{{\bar{\bf{y}}}_{k,i }  }^{}{{\bar{\bf{y}}}^{\rm{H}}_{k,j}  }\right]$. According to Eq. \eqref{projected signal}, it can be obtained that
\begin{equation}\begin{aligned} \label{corss-cov} 
		{{\bf{R}}}_{{\bar{\bf{y}}}_{k,i },{\bar{\bf{y}}}_{k,j}}
		& =  {{\tau q_k}} {\alpha}_{k,i}{\alpha}^{*}_{k,j}\tilde{\mathbf{R}}_{{\rm{JM}},k} + \mathbb{E}\left[{\bf{n}}_{k,i}{\bf{n}}^{\rm{H}}_{k,j}\right],i \ne j.\\
\end{aligned}\end{equation}
Referring to the discussion of the equivalent noise projected onto different pilots after Eq. \eqref{general projection2}, we obtain $\mathbb{E}\left[{\bf{n}}_{k,i}{\bf{n}}^{\rm{H}}_{k,j}\right] = {\bf{0}}$, $\forall i\ne j$.
Further define  
\begin{align}
	{\boldsymbol{\alpha}}_{k} = \left[{\alpha}_{k,2},\ldots, {\alpha}_{k,\tau}\right]^{\rm{T}} ,\nonumber
\end{align}
which represents the inner-product vector of the legitimate pilots and the random jamming pilot, referred to as the jamming pilot's inner-product vector.
By combining Eqs. \eqref{self-cov} and \eqref{corss-cov}, we have
\begin{equation}
	\begin{aligned} \label{all-cov}
		{{\bf{R}}}_{{\bar{\bf{y}}}_{k }}&=\tau q_k\left( \boldsymbol{\alpha}_k\boldsymbol{\alpha}_k^{\rm{H}}\right) \otimes\tilde{\mathbf{R}}_{{\rm{JM}},k}+\sigma^2{\bf{I}}_{\left( \tau - 1\right)  N_{\rm{b}}}.
	\end{aligned}
\end{equation}

\newcounter{TempEqCnt} 
\setcounter{TempEqCnt}{\value{equation}} 
\setcounter{equation}{19} 
\begin{figure*}[hb]
	\centering 
	\hrulefill 
	\vspace*{-2pt} 
	\begin{align} \label{log likelihood}
		\ln{f\left( {\bar{\bf{y}}}_{k };\left\lbrace \bar{q}_{k,i} \right\rbrace_{i=2}^{\tau},\mathcal{H}_1\right) } &\approx \sum_{i=2}^{\tau} \ln{f\left( {\bar{\bf{y}}}_{k,i };\bar{q}_{k,i},\mathcal{H}_1\right) },\nonumber \\
		&	=-{{\left( \tau-1\right) N_{\rm{b}}}}\ln \pi   {- \sum_{i=2}^{\tau}\sum_{n=1}^{N_{\rm{b}}}\ln \left( \bar{q}_{k,i} \lambda_{k,n} +\sigma^2\right) }
			{- \sum_{i=2}^{\tau}{\bar{\bf{y}}}_{k,i }^{\rm{H}}{\bf{V}}_k\left( \bar{q}_{k,i}{\boldsymbol{\Lambda}}_k+ \sigma^2{\bf{I}}_{N_{\rm{b}}}\right) ^{-1}{\bf{V}}^{\rm{H}}_k{\bar{\bf{y}}}_{k,i }},\nonumber \\
			&=-{{\left( \tau-1\right) N_{\rm{b}}}}\ln \pi   {-\sum_{i=2}^{\tau}\sum_{n=1}^{N_{\rm{b}}}\ln \left( \bar{q}_{k,i}\lambda_{k,n} +\sigma^2\right) }
			{-\sum_{i=2}^{\tau}\sum_{n=1}^{N_{\rm{b}}}\frac{|{\bar{\bf{y}}}^{\rm{H}}_{k,i}  {{\bf{v}}}_{k,n} |^2}{\bar{q}_{k,i}\lambda_{k,n}+\sigma^2}   }.
	\end{align}
\vspace*{-15pt} 
\end{figure*}
\setcounter{equation}{\value{TempEqCnt}} 

Next, we analyze the
likelihood functions of the vector ${{\bar{\bf{y}}}_{k}  }$. 
It has been shown previously that the beam-domain channel ${\tilde{\bf{h}}}^{{\mathrm{JM}}}_{k}$  follows the Gaussian distribution $\mathcal{CN}\left( {\bf{0}} , {\tilde{\bf{R}}}_{{\mathrm{JM}},k} \right) $.
Within the uplink training interval, we regard the jamming pilot's inner-product vector ${\boldsymbol{\alpha}}_{k}$ to be deterministic, and thus the projected observation vectors $\left\lbrace {\bar{\bf{y}}}_{k,i },i\in\vmathbb{t}\right\rbrace $ under the hypothesis of $\mathcal{H}_1$ are joint complex Gaussian vectors, equivalently, ${\bar{\bf{y}}}_{k }$ is a complex Gaussian vector.
In terms of statistical significance, the $(i,j)$-th term of $\mathbb{E}\left[ \boldsymbol{\alpha}_k\boldsymbol{\alpha}_k^{\rm{H}}\right] $ is $\mathbb{E}\left[{\alpha}_{k,i}{\alpha}^{*}_{k,j}\right]  = \frac{1}{\tau} {\boldsymbol{\phi}}^{\rm{T}}_{i} {\boldsymbol{\phi}}^{*}_{j}$, and thus the cross-term and variance term are 0 and $\frac{1}{\tau}$, respectively.
As an approximation, we ignore the correlation between ${\alpha}_{k,i}$ and ${\alpha}_{k,j},\forall i \ne j$, and treat $\left\lbrace {\bar{\bf{y}}}_{k,i },i\in\vmathbb{t}\right\rbrace $ as uncorrelated and thus independent.
Under $\mathcal{H}_1$ and a given effective subband jamming power $\bar{q}_{k,i} = {{\tau q_k}} |\alpha_{k,i}|^2 $, the likelihood function of ${\bar{\bf{y}}}_{k}$, which is the joint probability density function (PDF) of  
${\bar{\bf{y}}}_{k,i},i\in \vmathbb{t}$, is approximated as the following
\begin{equation} \label{joint pdf}
	f\left({\bar{\bf{y}}}_{k };\left\lbrace \bar{q}_{k,i} \right\rbrace_{i=2}^{\tau}, \mathcal{H}_1\right)  \approx \prod_{i=2}^{\tau} {f\left( {\bar{\bf{y}}}_{k,i };\bar{q}_{k,i},\mathcal{H}_1\right) },
\end{equation}
where the PDF for the $N_{\rm{b}}$-dimensional vector ${\bar{\bf{y}}}_{k,i},i\in\vmathbb{t}$ is
\begin{equation} 
		\begin{aligned} \label{pdf h1}
	f\left( {\bar{\bf{y}}}_{k,i };\bar{q}_{k,i},\mathcal{H}_1\right)  =  \frac{\exp\left[-{\bar{\bf{y}}}_{k,i }^{\rm{H}}\left( {{\bar{q}_{k,i} }}{\tilde{\bf{R}}}_{{\rm{JM}},k} + \sigma^2{\bf{I}}_{N_{\rm{b}}}\right) ^{-1}{\bar{\bf{y}}}_{k,i }\right]}{\pi^{{N_{\rm{b}}}}\det^{}\left( {{{\bar{q}_{k,i}} }}{\tilde{\bf{R}}}_{{\rm{JM}},k} + \sigma^2{\bf{I}}_{N_{\rm{b}}}\right) } .
\end{aligned}\end{equation}
By setting \(\bar{q}_{k,i} = 0\) in Eqs. \eqref{joint pdf} and \eqref{pdf h1}, the likelihood function under $\mathcal{H}_0$ can be obtained as
\begin{equation} \label{h0 likelihood}
	f\left( {\bar{\bf{y}}}_{k};\mathcal{H}_0\right)  =    {\left( \pi\sigma^2\right) ^{-\left( \tau-1\right) {N_{\rm{b}}}}}   
	{\exp\left( -\frac{1}{\sigma^2}\|{\bar{\bf{y}}}_{k}\|^2\right) }.
\end{equation}
We can see from Eq. \eqref{pdf h1} that different to jamming detection under conventional independent Rayleigh fading channels  in massive MIMO \cite{akhlaghpasand2017jamming}, the likelihood function in the hypothesis test for jamming detection under spatially correlated beamspace channels depends on the channel statistics of the jammer.
This enables us to study channel-statistics-assisted jamming detection scheme.

\vspace{0.5em}
\subsection{LMPT-Based Detection Assisted by Channel Statistics}\label{LMPT-Based Jamming detection scheme}

For the hypothesis testing problem in Eq. \eqref{Hypothesis testing}, the effective jamming  power ${\bar{q}_{k,i}}$ is nonnegative, making it a one-sided test.
The uniform most powerful test (UMPT) offers the best detection performance for this hypothesis test, but it rarely exists when the problem  contains unknown parameters \cite{wang2018detection}.
In the absence of the UMPT, the LMPT is widely used 
as an asymptotic form of the UMPT for the one-sided and close hypothesis testing problem \cite{mohammadi2022generalized}.

The LMPT statistic can be derived by using a first-order Taylor expansion of the log-likelihood-ratio at 
\(\bar{q}_{k,i} = 0,i\in \vmathbb{t}\). By eliminating the high-order infinitesimals, the LMPT statistic is
\begin{align}
		&\quad \sum_{i=2}^{\tau} \ln{f\left( {\bar{\bf{y}}}_{k,i };\bar{q}_{k,i},\mathcal{H}_1\right) }-\sum_{i=2}^{\tau}\ln{f\left( {\bar{\bf{y}}}_{k,i };\mathcal{H}_0\right) } \nonumber
		\\ &\approx  \sum_{i=2}^{\tau}\bar{q}_{k,i}\frac{\partial \ln f\left( {\bar{\bf{y}}}_{k,i };\bar{q}_{k,i},\mathcal{H}_1\right) }{\partial\bar{q}_{k,i}}\bigg|_{\bar{q}_{k,i} = 0}.
\end{align}
%
To deal with the unknown effective jamming power $\bar{q}_{k,i}$,
we replace it with its statistical mean: 
$\mathbb{E}\left [ \bar{q}_{k,i} \right ] = {{\tau q_k}} \mathbb{E}\left[ |\alpha_{k,i}|^2 \right] = {{\tau q_k}} {\boldsymbol{\phi}}^{\rm{T}}_{i}
\mathbb{E}\left [ {\boldsymbol{ \psi}}^{*}_{k} {\boldsymbol{ \psi}}^{\rm{T}}_{k}
\right ]  {\boldsymbol{\phi}}^{\rm{*}}_{i} =   q_k $.
This can be viewed as an approximation when the pilot length $\tau$ is large.
Thus, the LMPT-based detector can be transformed into
\begin{equation} \label{lmpt-1}
	\sum_{i=2}^{\tau}\frac{\partial \ln f\left( {\bar{\bf{y}}}_{k,i };{q}_k,\mathcal{H}_1\right) }{\partial {q}_k}\bigg|_{{q}_k = 0}
	\stackrel{\mathcal{H}_{1}}{\underset{\mathcal{H}_{0}}{\gtrless}}  \frac{\ln\gamma_{\rm{LMPT}}}{{q}_k}.
\end{equation}
One interpretation of the detector is  that the sum of the gradients at ${q}_k = 0$ of the PDFs of the projected observation vectors under $\mathcal{H}_1$ will exceed the decision threshold if the jammer is present.
To achieve a tractable form for detection, we need to further simplify Eq. \eqref{pdf h1}.
We denote the rank of the beam-domain channel covariance matrix of the jammer ${\tilde{\bf{R}}}_{{\rm{JM}},k}$ as $\rho_k$ and consider the eigenvalue decomposition (EVD) of ${\tilde{\bf{R}}}_{{\rm{JM}},k}$:
$
{\tilde{\bf{R}}}_{{\rm{JM}},k} = {\bf{V}}_k{\boldsymbol{\Lambda}}_k{\bf{V}}^{\rm{H}}_k
$,
where ${\boldsymbol{\Lambda}}_k = {\rm{diag}}\left \{ \left [ \lambda_{k,1},\ldots,\lambda_{k,{N_{\rm{b}}}} \right ]  \right \} $ with $\lambda_{k,1}\ge \lambda_{k,2}\ge \ldots \ge\lambda_{k,\rho_k} >0$ and $\lambda_{k,\rho_k+1}= \ldots =\lambda_{k,{N_{\rm{b}}}} =0$, and ${\bf{V}}_k = \left [ {\bf{v}}_{k,1},\ldots,{\bf{v}}_{k,{N_{\rm{b}}}}  \right ] $ is the eigenvector matrix. Therefore,
\begin{align} \label{det}
	\det\left( \bar{q}_{k,i}{\tilde{\bf{R}}}_{{\rm{JM}},k} + \sigma^2{\bf{I}}_{N_{\rm{b}}}\right)  
	&= \prod_{n=1}^{N_{\rm{b}}} \left( \bar{q}_{k,i} \lambda_{k,n} +\sigma_{}^2\right) ,
\end{align}
and 
\begin{equation}\begin{aligned} \label{inv}
		\left( \bar{q}_{k,i}{\tilde{\bf{R}}}_{{\rm{JM}},k} + \sigma^2{\bf{I}}_{N_{\rm{b}}}\right) ^{-1} 
		& =\left[ {\bf{V}}_k\left( \bar{q}_{k,i}{\boldsymbol{\Lambda}}_k+ \sigma^2{\bf{I}}_{N_{\rm{b}}}\right) {\bf{V}}^{\rm{H}}_k\right] ^{-1},\\
		& = {\bf{V}}_k\left( \bar{q}_{k,i}{\boldsymbol{\Lambda}}_k+ \sigma^2{\bf{I}}_{N_{\rm{b}}}\right) ^{-1}{\bf{V}}^{\rm{H}}_k .
\end{aligned}\end{equation}
By utilizing Eqs. \eqref{det} and \eqref{inv} in Eqs. \eqref{pdf h1} and \eqref{joint pdf}, the log-likelihood function in Eq. \eqref{joint pdf} can be further expressed as Eq. \eqref{log likelihood}, shown at the bottom of this page.\setcounter{equation}{20}
According to Eq. \eqref{log likelihood}, 
\begin{align}
	\frac{\partial 	\ln f\left( {\bar{\bf{y}}}_{k ,i};{q}_k,\mathcal{H}_1\right) }{\partial {q}_k} =  {-\sum_{n=1}^{N_{\rm{b}}}\frac{\lambda_{k,n}}{{q}_{k}\lambda_{k,n} +\sigma^2}}
	+\sum_{n=1}^{N_{\rm{b}}}
	\frac{\lambda_{k,n}|{\bar{\bf{y}}}^{\rm{H}}_{k,i}  {{\bf{v}}}_{k,n} |^2}
	{\left( {q}_{k}\lambda_{k,n}+\sigma^2\right) ^2} .
\end{align}
Therefore, 
\begin{align}
	\frac{\partial \ln f\left( {\bar{\bf{y}}}_{k,i };{q}_k,\mathcal{H}_1\right) }{\partial q_k}\bigg|_{{q}_k = 0} = {-{\frac{1}{\sigma^2}}{\sum_{n=1}^{N_{\rm{b}}}}{\lambda_{k,n}}}
	+\frac{1}{\sigma^4}\sum_{n=1}^{N_{\rm{b}}}
	{\lambda_{k,n}|{\bar{\bf{y}}}^{\rm{H}}_{k,i}  {{\bf{v}}}_{k,n} |^2}.
\end{align}
Combining with Eq. \eqref{lmpt-1}, we get the LMPT-based jamming detector as
\begin{equation} \label{LMPT detector}
	T_{\rm{LMPT}}\left( \left\lbrace {\bar{\bf{y}}}_{k,i }\right\rbrace_{i=2}^{\tau} \right)  = \sum_{i=2}^{\tau}\sum_{n=1}^{N_{\rm{b}}}
	{\lambda_{k,n}|{\bar{\bf{y}}}^{\rm{H}}_{k,i}  {{\bf{v}}}_{k,n} |^2} \stackrel{\mathcal{H}_{1}}{\underset{\mathcal{H}_{0}}{\gtrless}}\gamma'_{\rm{LMPT}},
\end{equation} 
where $\gamma'_{\rm{LMPT}}$ is the decision threshold of the LMPT detector.
The computational complexity of the proposed  jamming detector is analyzed as follows. 
The complexity for the eigen-decomposition of ${\tilde{\bf{R}}}_{{\rm{JM}},k}$ is typically on
the order of $\mathcal{O}({N^3_{\rm{b}}})$.
The computation of the inner-product of two $N_{\rm{b}}$-length vectors has a complexity of the order $\mathcal{O}({N_{\rm{b}}})$.
Since $(\tau-1){N_{\rm{b}}}$ times vector multiplications are involved, the computational complexity of the LMPT detector is $\mathcal{O}\left({N^3_{\rm{b}}}+{N^2_{\rm{b}}}\tau\right)$.
The simplicity of the LMPT statistics allows performance analysis of the detector.



\vspace{0.5em}
\subsection{Performance Analysis of the LMPT-Based  Scheme} \label{Performance Analysis of LMPT}
In the following, we proceed to analyze the distribution of the LMPT statistic and evaluate the performance of the proposed LMPT-based jamming detection scheme assisted by channel statistics. This includes deriving the probability of detection $P_{\rm{D}}$ and the probability of false alarm $P_{\rm{FA}}$.
The following theorem provides the distribution of $T_{\rm{LMPT}}$.

\begin{theorem} \label{theorem1}The PDF of $T_{\rm{LMPT}}$ under ${\mathcal{H}}_0$ is
	\vspace{-0.2cm}
	\begin{align} \label{PDF h0}
		f_{T}(x;\mathcal{H}_0)&=\sum_{m=0}^{\infty} \frac{a_{k,m}}{\Gamma\left[(\tau-1)\rho_k+m\right]({2 \beta_k})^{(\tau-1)\rho_k+m}} \nonumber
		\\&\hspace{2.8cm} \times x^{(\tau-1)\rho_k+m-1} e^{-\frac{x}{2 \beta_k}},
	\end{align}
	where $\beta_{k}=\frac{\rho_k}{2\sum_{j=1}^{\rho_k}\left(\lambda_{k,n}\sigma^2 \right) ^{-1}}$, 
	\begin{equation*}
		\label{para_lemma0}
		\begin{aligned}
			&a_{k,0}=\prod_{j=1}^{\rho_k}\left( \frac{2\beta_k}{\lambda_{k,n}\sigma^2}\right) ^{\tau-1},
			b_{k,m}=2\sum_{j=1}^{\rho_k}\left(1-  \frac{2\beta_k}{\lambda_{k,n}\sigma^2}\right)^m,\\
			&a_{k,m}= \frac{1}{2m}\sum_{r=0}^{m-1}b_{k,m-r}a_{k,r},
			\forall m>1.
		\end{aligned}
	\end{equation*} 
The PDF of $T_{\rm{LMPT}}$ under  \(\mathcal{H}_1\) is
\begin{equation} \label{PDF h1}
	f_{T}(x;\mathcal{H}_1)=\sum_{m=0}^{\infty} \frac{\bar a_{k,m}}{\Gamma\left(\varphi_k+m\right)\left( {2 \bar\beta_k}\right) ^{\varphi_k+m}} x^{\varphi_k+m-1} e^{-\frac{x}{2 \bar\beta_k}},
\end{equation}
where $\varphi_k = {{rank}}\left( {\bf{B}}_k \right) $, 
\begin{align} \label{BK}
	{\bf{B}}_k = {{\bf{R}}}^{1/2}_{{\bar{\bf{y}}}_{k }}
	\left({\mathbf{I}}_{\tau-1} \otimes {\tilde{\bf{R}}}_{{\rm{JM}},k}\right) 
	{{\bf{R}}}^{1/2}_{{\bar{\bf{y}}}_{k }}, 
\end{align}
\begin{equation*}
	\label{para_lemma1}
	\begin{aligned}
		&\bar{\beta}_{k}=\frac{\varphi_k}{2\sum_{j=1}^{\varphi_k}\epsilon_{k,j}^{-1}},
		&&\bar{a}_{k,0}=\prod_{j=1}^{\varphi_k}\frac{2\bar \beta_k}{\epsilon_{k,j}},\\
		&\bar b_{k,m}=2\sum_{j=1}^{\varphi_k}\left(1-\frac{2\bar\beta_k}{\epsilon_{k,j}}\right)^m,
		&&\bar a_{k,m}=\frac1{2m}\sum_{r=0}^{m-1}\bar b_{k,m-r}\bar a_{k,r},\\
		&&&\hspace{2.35cm}\forall m>1.
	\end{aligned}
\end{equation*} 
and $\epsilon_{k,1},\ldots,\epsilon_{k,\varphi_k}$ are the positive eigenvalues of ${\bf{B}}_k$.
\end{theorem}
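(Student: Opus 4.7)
The plan is to recognize $T_{\rm{LMPT}}$ as a quadratic form in a zero-mean complex Gaussian vector in each hypothesis, diagonalize it into a nonnegative linear combination of independent central $\chi_2^2$ variables, and then invoke a Moschopoulos-type series representation for that distribution.

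First, I would compactify the statistic by stacking projections: using ${\tilde{\bf{R}}}_{{\rm{JM}},k}={\bf{V}}_k{\boldsymbol{\Lambda}}_k{\bf{V}}_k^{\rm{H}}$,
\begin{equation*}
T_{\rm{LMPT}}=\sum_{i=2}^{\tau}{\bar{\bf{y}}}_{k,i}^{\rm{H}}{\tilde{\bf{R}}}_{{\rm{JM}},k}{\bar{\bf{y}}}_{k,i}={\bar{\bf{y}}}_{k}^{\rm{H}}\bigl({\bf{I}}_{\tau-1}\otimes{\tilde{\bf{R}}}_{{\rm{JM}},k}\bigr){\bar{\bf{y}}}_{k}.
\end{equation*}
Under $\mathcal{H}_0$, ${\bar{\bf{y}}}_{k,i}\sim\mathcal{CN}({\bf{0}},\sigma^2{\bf{I}}_{N_{\rm{b}}})$ are iid across $i$, and the unitarity of ${\bf{V}}_k$ makes the scalar projections $\xi_{i,n}={\bf{v}}_{k,n}^{\rm{H}}{\bar{\bf{y}}}_{k,i}$ mutually independent $\mathcal{CN}(0,\sigma^2)$. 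Since $\lambda_{k,n}=0$ for $n>\rho_k$,
\begin{equation*}
T_{\rm{LMPT}}=\sum_{i=2}^{\tau}\sum_{n=1}^{\rho_k}\lambda_{k,n}|\xi_{i,n}|^2=\sum_{n=1}^{\rho_k}\lambda_{k,n}\,U_n,
\end{equation*}
where $U_n=\sum_{i=2}^{\tau}|\xi_{i,n}|^2$ is an independent $\Gamma(\tau-1,\sigma^2)$ variable; equivalently, $T_{\rm{LMPT}}$ is a weighted sum of $(\tau-1)\rho_k$ independent central $\chi_2^2$'s in which each distinct weight $\lambda_{k,n}\sigma^2/2$ has multiplicity $\tau-1$, so the overall gamma shape is $(\tau-1)\rho_k$.

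For $\mathcal{H}_1$, I would whiten via ${\bar{\bf{y}}}_{k}={\bf{R}}_{{\bar{\bf{y}}}_k}^{1/2}{\bf{z}}$ with ${\bf{z}}\sim\mathcal{CN}({\bf{0}},{\bf{I}})$, converting the quadratic form into ${\bf{z}}^{\rm{H}}{\bf{B}}_k{\bf{z}}$ with ${\bf{B}}_k$ precisely the matrix in \eqref{BK}. An eigen-decomposition of ${\bf{B}}_k$ then diagonalizes the form to $T_{\rm{LMPT}}=\sum_{j=1}^{\varphi_k}\epsilon_{k,j}|\zeta_j|^2$ with $\zeta_j\sim\mathcal{CN}(0,1)$ iid, i.e.\ again a weighted sum of $\varphi_k$ independent central $\chi_2^2$'s with distinct weights $\epsilon_{k,j}/2$ and overall shape $\varphi_k$.

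Having reduced both hypotheses to $Y=\sum_\ell w_\ell V_\ell$ with $V_\ell\sim\chi_2^2$ independent, the final step is to invoke the Moschopoulos series representation for the PDF of a linear combination of independent gammas: for an admissible reference scale $2\beta$,
\begin{equation*}
f_Y(x)=\sum_{m=0}^{\infty}\frac{c_m}{\Gamma(\rho+m)\,(2\beta)^{\rho+m}}\,x^{\rho+m-1}e^{-x/(2\beta)},
\end{equation*}
where $c_0=\prod_\ell(2\beta/w_\ell)^{\alpha_\ell}$ and $c_m$ is obtained through the standard recursion driven by $\sum_\ell\alpha_\ell(1-2\beta/w_\ell)^m$. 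Substituting the weight lists identified above, together with the harmonic-mean choices $2\beta_k$ and $2\bar\beta_k$ as reference scales (both natural admissible values), yields exactly \eqref{PDF h0} and \eqref{PDF h1}. I expect the main obstacle to be the multiplicity bookkeeping under $\mathcal{H}_0$: because each weight $\lambda_{k,n}\sigma^2/2$ appears $\tau-1$ times, the Moschopoulos product and the recursion driver must be tracked consistently so that they collapse to the stated $a_{k,m}$ and $b_{k,m}$ (with the $\tau-1$ exponent in $a_{k,0}$ absorbing the multiplicity); the $\mathcal{H}_1$ case then follows immediately as the standard Moschopoulos series applied to the distinct eigenvalues of ${\bf{B}}_k$.
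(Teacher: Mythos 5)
Your proposal is correct and follows essentially the same route as the paper: both reduce $T_{\rm{LMPT}}$ to a quadratic form ${\bar{\bf{y}}}_{k}^{\rm{H}}\left({\bf{I}}_{\tau-1}\otimes{\tilde{\bf{R}}}_{{\rm{JM}},k}\right){\bar{\bf{y}}}_{k}$, diagonalize it under $\mathcal{H}_0$ via the eigenvectors of ${\tilde{\bf{R}}}_{{\rm{JM}},k}$ into weights $\lambda_{k,n}\sigma^2/2$ on $\chi^2_{2(\tau-1)}$ variables, and under $\mathcal{H}_1$ whiten with ${\bf{R}}_{{\bar{\bf{y}}}_{k}}^{-1/2}$ and eigendecompose ${\bf{B}}_k$ to get weights $\epsilon_{k,j}/2$ on $\chi^2_2$ variables, before invoking the standard series representation for a linear combination of independent gammas (the paper cites Provost's exact-distribution expansion, which is the same Moschopoulos-type result you use). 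The multiplicity bookkeeping you flag is exactly how the paper obtains the $\tau-1$ exponent in $a_{k,0}$, so no gap remains.
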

\begin{proof}
See Appendix A.
\end{proof}

Based on the PDFs in Theorem \ref{theorem1}, the following theorem gives analytic expressions for the probability of false alarm and the probability of detection.
\begin{theorem} \label{theorem2}
	With the LMPT-based detector for jamming detection,
	the false alarm probability and the detection probability in the $k$-th subband for a
	given threshold $\gamma'_{\rm{LMPT}}$ are derived as follows
	\begin{equation}\begin{aligned} \label{PFA}
			P_{\rm{FA}} 
			& = \sum_{m=0}^{\infty} a_{k,m} \frac{\Gamma\left((\tau-1)\rho_k+m,\frac{\gamma'_{\rm{LMPT}}}{2\beta_k}\right)}{\Gamma\left( \left(\tau-1\right)\rho_k+m\right) },
	\end{aligned}\end{equation}
\begin{equation}\begin{aligned} \label{PD}
		P_{\rm{D}} 
		& = \sum_{m=0}^{\infty}\bar a_{k,m} \frac{\Gamma\left(\varphi_k+m,\frac{\gamma'_{\rm{LMPT}}}{2\bar\beta_k}\right)}{\Gamma\left(\varphi_k+m\right)} ,
\end{aligned}\end{equation}
where $\Gamma\left( \cdot,\cdot\right) $ is the upper incomplete Gamma function.
\end{theorem}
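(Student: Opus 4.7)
The plan is to obtain $P_{\rm{FA}}$ and $P_{\rm{D}}$ by direct integration of the PDFs already established in Theorem \ref{theorem1}. By definition, $P_{\rm{FA}} = \Pr\{T_{\rm{LMPT}} > \gamma'_{\rm{LMPT}} \mid \mathcal{H}_0\} = \int_{\gamma'_{\rm{LMPT}}}^{\infty} f_T(x;\mathcal{H}_0)\, dx$ and $P_{\rm{D}} = \int_{\gamma'_{\rm{LMPT}}}^{\infty} f_T(x;\mathcal{H}_1)\, dx$, so the whole statement reduces to evaluating these two tail integrals in closed form.

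The first key observation is that every term in the series expressions for $f_T(x;\mathcal{H}_0)$ and $f_T(x;\mathcal{H}_1)$ is (up to the coefficient $a_{k,m}$ or $\bar a_{k,m}$) a Gamma density: the $m$-th summand in $f_T(x;\mathcal{H}_0)$ is $a_{k,m}$ times a Gamma$\bigl((\tau-1)\rho_k+m,\,2\beta_k\bigr)$ PDF, and analogously for $\mathcal{H}_1$ with shape $\varphi_k+m$ and scale $2\bar\beta_k$. I would therefore swap the sum and the integral, reducing the problem to evaluating the tail integral of a Gamma density, for which the standard identity
\begin{equation*}
\int_{\gamma}^{\infty} \frac{1}{\Gamma(\nu)\theta^{\nu}}\,x^{\nu-1} e^{-x/\theta}\,dx
\;=\; \frac{\Gamma\!\left(\nu,\,\gamma/\theta\right)}{\Gamma(\nu)}
\end{equation*}
holds, with $\Gamma(\cdot,\cdot)$ the upper incomplete Gamma function. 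Applying this termwise with $(\nu,\theta)=((\tau-1)\rho_k+m,\,2\beta_k)$ and $(\varphi_k+m,\,2\bar\beta_k)$ respectively yields exactly Eqs. \eqref{PFA} and \eqref{PD}.

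The only nontrivial step is justifying the interchange of summation and integration. I would appeal to Tonelli's theorem: the integrands are nonnegative, so the sum-integral swap is unconditional provided the double sum of nonnegative terms converges, which it does because the original series expressions define valid PDFs (they integrate to $1$ over $(0,\infty)$, as can be checked by applying the same termwise identity with $\gamma = 0$ and noting $\Gamma(\nu,0)/\Gamma(\nu)=1$, recovering $\sum_m a_{k,m}=1$ and $\sum_m \bar a_{k,m}=1$). This also provides a useful internal consistency check on the coefficient recursions stated in Theorem \ref{theorem1}.

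I do not expect any genuine obstacle here; the technical content was absorbed into Theorem \ref{theorem1}, where the chi-squared-mixture representation and the Mellin/series expansion of its PDF were the real work. The present theorem is essentially a corollary: identify each series term as a scaled Gamma PDF, use Tonelli to exchange $\sum$ and $\int$, and apply the upper incomplete Gamma identity term by term. The result is then packaged into the compact expressions in Eqs. \eqref{PFA} and \eqref{PD}.
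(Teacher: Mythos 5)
Your proof is correct and takes essentially the same route as the paper: the paper's own argument simply integrates the series PDFs of Theorem \ref{theorem1} over $(\gamma'_{\rm{LMPT}},\infty)$, recognizing each term as a scaled Gamma density whose tail is $\Gamma(\nu,\gamma/\theta)/\Gamma(\nu)$. Your added justification of the termwise integration via Tonelli (and the consistency check $\sum_m a_{k,m}=\sum_m \bar a_{k,m}=1$) is a harmless elaboration of the same computation.
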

\begin{proof}
	By using the definitions of the detection probability and false alarm probability, along with the PDFs  given in Eqs. \eqref{PDF h0} and \eqref{PDF h1}, the above expression can be easily derived.
\end{proof}

The analytical expression for the false alarm probability relies on the parameters $\left\lbrace  a_{k,m} \right\rbrace_{m=1}^{\infty} $ and $\beta_k $, which are computed based on \(\left\lbrace \lambda_{k,n}\sigma^2\right\rbrace _{n=1}^{\rho_k}\).
Thus, with the known receiver noise level and channel statistics of the jammer, 
we can determine the threshold ${\gamma'_{\rm{LMPT}}}$ for the desired $P_{\rm{FA}} $ level using the derived false alarm probability expression.
By noticing that  the  false alarm probability decreases as the decision threshold increases,
 the bisection method can be employed.
\section{MMSE-Based Two-Step Channel Estimation For the Jammer and Legitimate Users }\label{MMSE-based Spatial-Domain CE}
\vspace{-0.2em}

In this section, we study the estimation of both jammer and user channels. Due to the unknown nature of the random jammer pilot, there are unknown parameters, which are the jamming pilot's inner-products, in the channel estimation problems. 
The estimation of the jamming channel is actually a joint estimation of ${{{\bf{h}}}^{\rm{JM}}_{k}}$ and $\boldsymbol{\alpha}_k$. Similarly, the estimation of the user channel is also a joint estimation of ${{\bf{h}}}_{k}$ and $|{\alpha}_{k,1}|$.

To begin with, we show that there is 
an ambiguity problem in the estimation of ${{{\bf{h}}}^{\rm{JM}}_{k}}$ and ${\boldsymbol{\alpha}}_k$ with respect to one phase.
For an arbitrary angle $\theta$,  $\left( {{{\bf{h}}}^{\rm{JM}}_{k}}e^{-j\theta},{\boldsymbol{\alpha}}_k e^{-j\theta}\right) $ will produce the same observation signals, referred to Eq. \eqref{projected signal},  as  $\left( {{{\bf{h}}}^{\rm{JM}}_{k}},{\boldsymbol{\alpha}}_k \right) $, and
the properties of $\left(  {{{\bf{h}}}^{\rm{JM}}_{k}},{\boldsymbol{\alpha}}_k  \right) $ are invariant under the phase shift. Consequently, 
there is a phase that is not observable from the model, and the estimation problem has ambiguity (in terms of a phase shift).
Note that there is no ambiguity in the estimation of the product $\alpha_{k,i}{{{\bf{h}}}^{\rm{JM},*}_{k}}$.
By combining the angle of $\alpha_{k,2}$ with ${{{\bf{h}}}^{\rm{JM}}_{k}}$, i.e., by redefining $\alpha_{k,2}$ as $|\alpha_{k,2}|$ and ${{{\bf{h}}}^{\rm{JM}}_{k}}$ as ${{{\bf{h}}}^{\rm{JM}}_{k}}e^{ -j \angle \alpha_{k,2} }$,  the ambiguity of the estimation problem vanishes with the condition that the phase of the new $\alpha_{k,2}$ is zero.
Further define 
	\begin{align} \label{redefined alpha}
		\bar{\boldsymbol{\alpha}}_{k} = \left[\left| {\alpha}_{k,2}\right| ,\left| {\alpha}_{k,3}\right|e^{j \theta_{k,3}},\ldots, \left| {\alpha}_{k,\tau}\right|e^{j \theta_{k,\tau}}\right]^{\rm{T}} = {\boldsymbol{\alpha}}_{k}e^{ -j \angle \alpha_{k,2} },
	\end{align}
where $\theta_{k,i} = \angle \alpha_{k,i} - \angle \alpha_{k,2},i \in \vmathbb{t}\setminus \left \{ 2 \right \},$ denotes the angle difference of the jamming pilots' inner-product.

We then propose a two-step estimation scheme using the projected observation vectors. The first step is to estimate the norms and phase differences of the jamming pilot's inner-products, as shown in Section \ref{estimation of pilot}. The second step, detailed in  Sections \ref{jamming channel estimation}  and \ref{user channel estimation}, is to estimate  the channels based on the estimated values of the inner-products.



\vspace{0.3em}
\subsection{Multiple Projected Observation Vectors Based MMSE Jamming Channel Estimation}\label{jamming channel estimation}

In this subsection, we work on the estimation of the jamming channel.
With the stacked vector consisting of all projected observation vectors  
$\bar{\mathbf{y}}_k$ in Eq. \eqref{projected signal}, the linear MMSE estimation can be expressed as ${\hat{\bf{h}}}^{{\rm{JM}},*}_{k} = {\mathbf{R}}_{{{{\bf{h}}}^{{\rm{JM}},*}_{k}},\bar{\bf{y}}_{k}}{\mathbf{R}}^{-1}_{\bar{\bf{y}}_{k}}\bar{\mathbf{y}}_k$,
where ${\mathbf{R}}_{{{\bf{h}}}^{{\rm{JM}},*}_{k},\bar{\bf{y}}_{k}} = \mathbb{E}\left[{{{\bf{h}}}^{{\rm{JM}},*}_{k}}{\bar{\bf{y}}}_{k}^{\rm{H}}\right]$ denotes the mutual covariance matrix between the conjugate of the jamming channel and the stacked vector, and ${\mathbf{R}}^{}_{\bar{\bf{y}}_{k}}$ is the self-covariance matrix of the vector ${\bar{\bf{y}}}_{k}$, as shown in Eq. \eqref{all-cov}.

For the mutual covariance matrix ${\mathbf{R}}_{{{{\bf{h}}}^{{\rm{JM}},*}_{k}},\bar{\bf{y}}_{k}}$, we have
\begin{align} \label{cross-cor}
		\quad {\mathbf{R}}_{{{{\bf{h}}}^{{\rm{JM}},*}_{k}},\bar{\bf{y}}_{k}}
		&= \mathbb{E}\left[{{\bf{h}}}^{{\rm{JM}},*}_{k} \left (  {\sqrt{\tau q_k}}\bar{\boldsymbol{\alpha}}_{k}\otimes {{\bf{U}}}^{\rm{H}}_{\rm{JM}} 
		{{{\bf{h}}}^{{\rm{JM}},*}_{k}}\right )^{\rm{H}} \right]
		\nonumber \\&\hspace{2cm}+  \mathbb{E}\left[{{{\bf{h}}}^{{\rm{JM}},*}_{k}}\left ({\bf{n}}^{\rm{H}}_{k,2},\ldots,{\bf{n}}^{\rm{H}}_{k,\tau}  \right ) \right],\nonumber \\
		& = {\sqrt{\tau q_k}}\bar{\boldsymbol{\alpha}}_{k}^{\rm{H}}\otimes \left({{\bf{R}}}^{*}_{{\rm{JM}},k} {{\bf{U}}}_{\rm{JM}}  \right) .
\end{align}
From Eqs. \eqref{all-cov} and \eqref{cross-cor}, 
the MMSE estimate of the jammer's channel is 
\begin{equation}\begin{aligned} \label{jam MMSE result}
		{\hat{{\bf{h}}}_{k}^{{\rm{JM}}}} 
		& = \left(
		{\sqrt{\tau q_k}}{\bar{\boldsymbol{\alpha}}}_k^{\rm{H}}\otimes \left({{\bf{R}}}^{*}_{{\rm{JM}},k} {{\bf{U}}}_{\rm{JM}} \right) \right.  \\
		&\quad \times \left. \left [  \tau q_k\left( {\bar{\boldsymbol{\alpha}}}_k{\bar{\boldsymbol{\alpha}}}_k^{\rm{H}}\right) \otimes\tilde{\mathbf{R}}_{{\rm{JM}},k}+\sigma^2{\bf{I}}_{\left( \tau-1\right)  N_{\rm{b}}}
		\right ] ^{-1}
		{\bar{\bf{y}}}_{k}  \right)^{*}.\\
\end{aligned}\end{equation}
The jamming channel estimate $	{\hat{{\bf{h}}}_{k}^{{\rm{JM}}}} $ relies on the redefined  jamming pilot's inner-product vector.
The estimation of  $\bar{\boldsymbol{\alpha}}_{k} $  is deferred to  Section \ref{estimation of pilot}.

\vspace{1em}
\subsection{MMSE Criterion-Based Users' Channel Estimation}\label{user channel estimation}

From Eq. \eqref{projected signal used},  the projected observation vector corresponding to the used pilot ${\bar{\bf{y}}}_{k,1 }$ contains channel information about the legitimate user, so we
  use this vector to estimate the user channel. The desired  mutual covariance matrix between the conjugate of the user's channel and  the projected observation vector ${\bar{\bf{y}}}_{k,1 }$ is first calculated
\begin{equation}\begin{aligned}
		{\mathbf{R}}_{{{\bf{h}}}^{*}_{k},{\bar{\bf{y}}}_{k,1 }}&= \mathbb{E}\left[{{\bf{h}}}^{*}_{k}{\bar{\bf{y}}}_{k,1 }^{\rm{H}}\right]= {\sqrt{\tau p_t}}{\mathbf{R}}_{k}^*{{\bf{U}}}_{k},
\end{aligned}\end{equation}
and the self-covariance matrix of the vector ${\bar{\bf{y}}}_{k,1 }$ is
\begin{equation}\begin{aligned}
		{\mathbf{R}}_{{\bar{\bf{y}}}_{k,1 }}& = \mathbb{E}\left[{\bar{\bf{y}}}_{k,1 }{\bar{\bf{y}}}_{k,1 }^{\rm{H}}\right],\\
		& = {{\tau p_t}}{{\bf{U}}}^{\rm{H}}_{k}{\mathbf{R}}_{k}^*{{\bf{U}}}_{k}+ {{\tau q_k}}\left| {\alpha}_{k,1}\right|^2 \tilde{\mathbf{R}}_{{\rm{JM}},k}
		+\sigma^2{\bf{I}}_{N_{\rm{b}}}.
\end{aligned}\end{equation}
Hence, the linear MMSE channel estimate for user $k$ is
\begin{align}
		{\hat{{\bf{h}}}_{k}} 
		&= 
		\left({\mathbf{R}}_{{{\bf{h}}}^{*}_{k},{\bar{\bf{y}}}_{k,1 }}{\mathbf{R}}^{-1}_{{\bar{\bf{y}}}_{k,1 }}{\bar{\bf{y}}}_{k,1 }\right)^{*},\nonumber\\
		&={\sqrt{\tau p_t}}{\mathbf{R}}_{k}{{\bf{U}}}^{*}_{k}\left[{{\tau p_t}}{{\bf{U}}}^{\rm{T}}_{k}{\mathbf{R}}_{k}{{\bf{U}}}^{*}_{k}\right.  \nonumber
		\\ & \hspace{1cm} \left.+ {{\tau q_k}}\left| {\alpha}_{k,1}\right|^2\tilde{\mathbf{R}}^{*}_{{\rm{JM}},k}
		+\sigma^2{\bf{I}}_{N_{\rm{b}}}\right]^{-1}{\bar{\bf{y}}}^{*}_{k,1 }.
\end{align}
Similarly, the estimation depends on $|\alpha_{k,1}|^2$ and  we refer to Section \ref{estimation of pilot}
for the estimation of $|\alpha_{k,1}|^2$.

\vspace{1em}
\subsection{Estimation Scheme for the Jamming Pilot's Inner-Product }\label{estimation of pilot}
\vspace{-1em}
The jamming pilot's inner-product value  involved in channel estimation comprises two components: $|{\alpha}_{k,1}|$ and  the redefined $\bar{\boldsymbol{\alpha}}_{k} $ in Eq. \eqref{redefined alpha}.
$|{\alpha}_{k,1}|$ affects the estimation of the legitimate user's channel, while 
$\bar{\boldsymbol{\alpha}}_{k}$ affect the estimation of the jammer's channel.
From Eq. \eqref{projected signal}, for $i \in \vmathbb{t}, $
\begin{align}\label{inner product of y2}
	&\|\bar{\mathbf{y}}_{k,i}\|^2=\tau q_k|\alpha_{k,i}|^2{\tilde{\bf{h}}}^{{\mathrm{JM}},\mathrm{H}}_{k}{\tilde{\bf{h}}}^{\mathrm{JM}}_{k}+{v}_{k,i},
\end{align}
where 
$
	{v}_{k,i} = 
	2\sqrt{\tau q_k} \Re\left[  \alpha_{k,i}\mathbf{n}^{\rm{H}}_{k,i}{\tilde{\bf{h}}}^{{\mathrm{JM}}}_{k} \right] + \mathbf{n}^{\rm{H}}_{k,i}\mathbf{n}_{k,i}.
$
Eq. \eqref{inner product of y2}  can be further written as 
\begin{align} \label{slight variations}
	&\|\bar{\mathbf{y}}_{k,i}\|^2=\tau q_k|\alpha_{k,i}|^2{\rm{tr}}\left( {\tilde{\bf{R}}}_{{\mathrm{JM}},k}\right) +\tilde{{v}}_{k,i},
\end{align}
where $\tilde{{v}}_{k,i} = {{v}}_{k,i} + \Delta_{k,i} $ and 
\begin{align}\label{the differece of 1st and average}
	\Delta_{k,i} =  \tau q_k|\alpha_{k,i}|^2{\tilde{\bf{h}}}^{{\mathrm{JM}},\mathrm{H}}_{k}{\tilde{\bf{h}}}^{\mathrm{JM}}_{k}
	-\tau q_k|\alpha_{k,i}|^2{\rm{tr}}\left( {\tilde{\bf{R}}}_{{\mathrm{JM}},k}\right) .
\end{align}
We can do similar analysis of Eq. \eqref{projected signal used} to get
$\|\bar{\mathbf{y}}_{k,1}\|^2=\tau q_k|\alpha_{k,1}|^2{\rm{tr}}\left( {\tilde{\bf{R}}}_{{\mathrm{JM}},k}\right) +\tilde{{v}}_{k,1}$,
where
\begin{align} \label{random and noise}
	\tilde{{v}}_{k,1} =& \tau {p_{t}}{\tilde{\bf{h}}}^{\rm{H}}_k{\tilde{\bf{h}}}_k + \Delta_{k,1} 
	+ 2 \tau\sqrt{p_t q_k}\Re \left[ \alpha_{k,1}{\tilde{\bf{h}}}^{\rm{H}}_k {\tilde{\bf{h}}}^{\mathrm{JM}}_{k}   \right]  + \left\| \mathbf{n}_{k,1}\right\| ^2
	\nonumber
	\\&+ 
	2 \sqrt{\tau p_t} \Re \left[ \mathbf{n}^{\rm{H}}_{k,1} {\tilde{\bf{h}}}_k   \right] 
	+	2\sqrt{\tau q_k}  \Re \left[ \alpha_{k,1}\mathbf{n}^{\rm{H}}_{k,1}  {\tilde{\bf{h}}}^{\mathrm{JM}}_{k}   \right]
	, \nonumber
\end{align}
${\tilde{\bf{h}}}_k \triangleq {\bf{U}}_k^{\rm{H}}{\bf{h}}_k^* \sim\mathcal{CN}\left( {\bf{0}} , {\tilde{\bf{R}}}_k \right) $ with ${\tilde{\bf{R}}}_k={\bf{U}}_k^{\rm{H}}{\bf{R}}^*_k{\bf{U}}_k$, and $\Delta_{k,1}$ is of the same form as defined in Eq. \eqref{the differece of 1st and average}.
Now, considering $\sum_i|\alpha_{k,i}|^2=1$, we have
\begin{align}
	\|\bar{\mathbf{y}}_{k,1}\|^2=-\tau q_k{\rm{tr}}({\tilde{\bf{R}}}_{{\mathrm{JM}},k})\sum_{i\in \vmathbb{t}}|\alpha_{k,i}|^2+\tilde{{v}}_{k,1}.
\end{align}
Thus, we obtain the linear model: ${\bf{b}}_k = {{\bf{C}}}_k {\bf{x}}_{k} +\tilde{\bf{v}}_{k}$, 
where ${\bf{b}}_k = [\|\bar{\mathbf{y}}_{k,1}\|^2,\ldots,\|\bar{\mathbf{y}}_{k,\tau}\|^2]^{\rm{T}}$, ${\bf{x}}_{k} = \left[|\alpha_{k,2}|^2,\ldots,|\alpha_{k,\tau}|^2 \right] ^{\rm{T}}$, $\tilde{\bf{v}}_{k} = \left[ \tilde{{v}}_{k,1},\ldots,\tilde{{v}}_{k,\tau}\right]^{\rm{T}} $,
and ${{\bf{C}}}_k = \tau q_k{\rm{tr}}({\tilde{\bf{R}}}_{{\mathrm{JM}},k}) \left[-{\bf{1}}_{\tau-1}, {\bf{I}}_{\tau-1} \right]^{\rm{T}}$.
The linear MMSE estimation of ${\bf{x}}_{k}$ is
\begin{align} \label{norm estimation}
	{\hat{\bf{x}}}_{k} = \mathbb{E}\left[ {\bf{x}}_{k}{\bf{b}}^{\rm{T}}_k \right] \left( \mathbb{E}\left[ {\bf{b}}_k {\bf{b}}^{\rm{T}}_k \right] \right) ^{-1} {\bf{b}}_k.
\end{align}
Considering that $|{\alpha}_{k,i}|^2$ is non-negative, we propose the following estimate based on Eq. \eqref{norm estimation}
\begin{align} \label{40}
	|\hat{\alpha}_{k,i}|^2 = \max(0, {\hat{{x}}}_{k,i}),
\end{align}
where ${\hat{{x}}}_{k,i}$ is the $i$-th element of the vector ${\hat{\bf{x}}}_{k}$.
Since the random/noise terms are uncorrelated with the jamming pilot's inner-product, $\mathbb{E}\left[ {\bf{x}}_{k}{\bf{b}}^{\rm{T}}_k \right]$ and $\mathbb{E}\left[ {\bf{b}}_k {\bf{b}}^{\rm{T}}_k \right]$ can be expressed as
\begin{align}
	\mathbb{E}\left[ {\bf{x}}_{k}{\bf{b}}^{\rm{T}}_k \right] &= \mathbb{E}\left[ {\bf{x}}_{k}{\bf{x}}^{\rm{T}}_k \right]{{\bf{C}}}_k^{\rm{T}} + \mathbb{E}\left[ {\bf{x}}_{k} \right]
	\mathbb{E}\left[ \tilde{\bf{v}}^{\rm{T}}_{k} \right], \label{41}\\ 
	\mathbb{E}\left[ {\bf{b}}_k {\bf{b}}^{\rm{T}}_k \right] &= {{\bf{C}}}_k\mathbb{E}\left[ {\bf{x}}_{k}{\bf{x}}^{\rm{T}}_k \right]{{\bf{C}}}_k^{\rm{T}} + {{\bf{C}}}_k\mathbb{E}\left[ {\bf{x}}_{k} \right]
	\mathbb{E}\left[ \tilde{\bf{v}}^{\rm{T}}_{k} \right] \nonumber \\
	&\hspace{0.5cm}+ \mathbb{E}\left[ \tilde{\bf{v}}_{k} \right]\mathbb{E}\left[ {\bf{x}}_{k}^{\rm{T}} \right]{{\bf{C}}}_k^{\rm{T}}+
	\mathbb{E}\left[\tilde{\bf{v}}_{k} \tilde{\bf{v}}^{\rm{T}}_{k} \right] \label{42}.
\end{align}
The problem now shifts to determining the expectations and self-covariance matrices of ${\bf{x}}_{k}$ and $\tilde{\bf{v}}_{k}$.
The following lemmas give the expressions for these values.

\begin{lemma} \label{lemma1}
	The expectation of ${\bf{x}}_{k}$ is 
	\begin{align}\label{expectation of xk}
		\mathbb{E}\left[ {\bf{x}}_{k}\right]  = \frac{1}{\tau}{\bf{1}}_{\tau-1},
	\end{align}
	and the correlation matrix is 
	\begin{align} \label{covariace of xk}
		\mathbb{E}\left[ {\bf{x}}_{k}{\bf{x}}^{\rm{T}}_k \right] = \frac{1}{\tau^2} {\bf{1}}{\bf{1}}^{\rm{T}} + \frac{1}{\tau^2} {\bf{I}}_{\tau-1}.
	\end{align}
\end{lemma}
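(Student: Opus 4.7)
The plan is to exploit the definitions $\alpha_{k,i} = \boldsymbol{\phi}_i^{\rm T}\boldsymbol{\psi}_k^*$ with $\boldsymbol{\psi}_k \sim \mathcal{CN}(\mathbf{0},\frac{1}{\tau}\mathbf{I}_\tau)$ and the fact that the pilot vectors $\{\boldsymbol{\phi}_i\}$ are orthonormal, so the scalars $\alpha_{k,i}$ are jointly complex Gaussian with a structure that decouples entirely, and then to compute second and fourth absolute moments of complex Gaussians.

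First, I would verify the marginal statistics. Since $\boldsymbol{\psi}_k^*$ is circularly symmetric with covariance $\frac{1}{\tau}\mathbf{I}_\tau$, the scalar $\alpha_{k,i}$ is zero-mean complex Gaussian with variance $\mathbb{E}[|\alpha_{k,i}|^2] = \boldsymbol{\phi}_i^{\rm T}(\tfrac{1}{\tau}\mathbf{I}_\tau)\boldsymbol{\phi}_i^* = \frac{1}{\tau}\|\boldsymbol{\phi}_i\|^2 = \frac{1}{\tau}$. Stacking over $i \in \{2,\ldots,\tau\}$ immediately gives $\mathbb{E}[\mathbf{x}_k] = \frac{1}{\tau}\mathbf{1}_{\tau-1}$, which is Eq.~\eqref{expectation of xk}.

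Next I would establish pairwise independence of the $\alpha_{k,i}$'s. Because $\boldsymbol{\psi}_k$ is circularly symmetric, its pseudo-covariance $\mathbb{E}[\boldsymbol{\psi}_k^*\boldsymbol{\psi}_k^{\rm H}]$ vanishes, so $\mathbb{E}[\alpha_{k,i}\alpha_{k,j}] = 0$ for all $i,j$. The ordinary cross-covariance is $\mathbb{E}[\alpha_{k,i}\alpha_{k,j}^*] = \boldsymbol{\phi}_i^{\rm T}(\tfrac{1}{\tau}\mathbf{I}_\tau)\boldsymbol{\phi}_j^* = \frac{1}{\tau}\boldsymbol{\phi}_j^{\rm H}\boldsymbol{\phi}_i$, which equals zero whenever $i \ne j$ by pilot orthonormality. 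Zero covariance plus zero pseudo-covariance implies that $\alpha_{k,i}$ and $\alpha_{k,j}$ are jointly circularly symmetric Gaussian and therefore independent for $i \ne j$.

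From here the correlation matrix splits into diagonal and off-diagonal pieces. On the diagonal, I apply the standard fourth-moment identity for a zero-mean complex Gaussian of variance $\sigma^2$, namely $\mathbb{E}[|X|^4] = 2\sigma^4$, to obtain $\mathbb{E}[|\alpha_{k,i}|^4] = \frac{2}{\tau^2}$. Off the diagonal, independence yields $\mathbb{E}[|\alpha_{k,i}|^2|\alpha_{k,j}|^2] = \frac{1}{\tau^2}$. Packaging these entries gives $\mathbb{E}[\mathbf{x}_k\mathbf{x}_k^{\rm T}] = \frac{1}{\tau^2}\mathbf{1}\mathbf{1}^{\rm T} + \frac{1}{\tau^2}\mathbf{I}_{\tau-1}$, which is Eq.~\eqref{covariace of xk}. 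The only subtle step is the independence argument, since one must invoke both the vanishing cross-covariance (from pilot orthogonality) and the vanishing pseudo-covariance (from circular symmetry); everything else is direct moment computation.
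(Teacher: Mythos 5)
Your proof is correct, and it reaches the lemma by a genuinely different route than the paper. The paper's Appendix B brute-forces the fourth-order moment: it writes $\alpha_{k,i}=\sum_n\phi_{i,n}\psi_{k,n}^*$, expands $\mathbb{E}\left[|\alpha_{k,i}|^2|\alpha_{k,j}|^2\right]$ into four sums $\Psi_1,\ldots,\Psi_4$ over pilot entries, evaluates them using the elementwise moments $\mathbb{E}\left[|\psi_{k,n}|^2\right]=\tfrac{1}{\tau}$ and $\mathbb{E}\left[|\psi_{k,n}|^4\right]=\tfrac{2}{\tau^2}$, and arrives at the intermediate identity $\mathbb{E}\left[|\alpha_{k,i}|^2|\alpha_{k,j}|^2\right]=\tfrac{1}{\tau^2}\left({\rm tr}\left(\boldsymbol{\phi}_i\boldsymbol{\phi}_i^{\rm H}\boldsymbol{\phi}_j\boldsymbol{\phi}_j^{\rm H}\right)+1\right)$, which is then specialized using pilot orthonormality (${\rm tr}(\cdot)=\delta_{ij}$). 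You instead exploit that the $\alpha_{k,i}$ are jointly circularly symmetric Gaussian (linear images of $\boldsymbol{\psi}_k^*$), show that both the cross-covariance (by pilot orthogonality) and the pseudo-cross-covariance (by circular symmetry) vanish so that $\alpha_{k,i}$ and $\alpha_{k,j}$ are independent for $i\ne j$, and then read off the diagonal from $\mathbb{E}\left[|X|^4\right]=2\sigma^4$ and the off-diagonal from factorization. Your argument is shorter and more conceptual, and correctly isolates the one delicate point (independence needs both the Hermitian and the pseudo cross-moments to vanish); the paper's elementwise expansion never needs the independence argument and yields a slightly more general intermediate expression that would survive even if the pilots were merely unit-norm but not orthogonal. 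Both yield diagonal entries $\tfrac{2}{\tau^2}$ and off-diagonal entries $\tfrac{1}{\tau^2}$, i.e., Eq.~\eqref{covariace of xk}, and the same first-moment computation gives Eq.~\eqref{expectation of xk}.
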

\begin{proof}
	See Appendix B.
\end{proof}

\begin{lemma} \label{lemma2}
	The expectation of $\tilde{\bf{v}}_{k}$ is 
	\begin{align} \label{expectation of vk}
		\mathbb{E}\left[ \tilde{\bf{v}}_{k} \right]
		=  \tau {p_{t}}{\rm{tr}}\left({\tilde{\bf{R}}}_k \right)
		{\bf{e}}_1
		+N_{\rm{b}}\sigma^2 
		{\bf{1}}_{\tau},
	\end{align}
	and the  correlation matrix can be expressed as
	\begin{align}
		\mathbb{E}\left[\tilde{\bf{v}}_{k} \tilde{\bf{v}}^{\rm{T}}_{k} \right]
		&= \begin{pmatrix}
			\varrho_{1,1}  &\varrho_{1,2}{\bf{1}}_{\tau-1}  \\
			\varrho_{2,1}{\bf{1}}_{\tau-1}^{\rm{T}}  &\varrho_{2,2}{\bf{1}}{\bf{1}}^{\rm{T}}
		\end{pmatrix},
	\end{align}
	where $\varrho_{1,1} = \tilde{\varrho}+  2{\tau p_t q_k}\left| {\rm{tr}}\left( {\tilde{\bf{R}}}_{{\rm{JM}},k} {\tilde{\bf{R}}}_{k} \right)\right| + 2{q_k\sigma^2}{\rm{tr}}\left( {\tilde{\bf{R}}}_{{\rm{JM}},k}  \right) $,
	$$
	\tilde{\varrho} = \tau^2p_t^2\Theta_{k}  + 2q_k^2\left\|{\tilde{\bf{R}}}_{{\rm{JM}},k} \right\|_{\rm{F}}^2
	+{\small{\left(  N_{\rm{b}}+1\right)}}\sigma^2\left( 2\tau p_t{\rm{tr}}\left( {\tilde{\bf{R}}}_{k}\right)  + N_{\rm{b}}\sigma^2 \right) , $$
	$$
	\varrho_{1,2} = \varrho_{2,1} = N_{\rm{b}}\sigma^2 \left( \tau p_t{\rm{tr}}\left( {\tilde{\bf{R}}}_{k}\right)  + N_{\rm{b}}\sigma^2 \right)    +q_k^2\left\|{\tilde{\bf{R}}}_{{\rm{JM}},k} \right\|_{\rm{F}}^2,
	$$
	$$
	\varrho_{2,2} =  N^2_{\rm{b}}\sigma^4  +q_k^2\left\|{\tilde{\bf{R}}}_{{\rm{JM}},k} \right\|_{\rm{F}}^2,
	$$
	with $\Theta_{k}  =  \left| {\rm{tr}}\left( {\tilde{\bf{R}}}_k \right)  \right| ^2 +\left\|{\tilde{\bf{R}}}_k \right\|_{\rm{F}}^2$. 
\end{lemma}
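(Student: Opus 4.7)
The plan is to derive both claims by direct moment computation, exploiting the mutual independence of $\boldsymbol{\alpha}_k$, ${\tilde{\bf{h}}}_k$, ${\tilde{\bf{h}}}^{{\rm{JM}}}_k$, and the noise vectors $\{{\bf{n}}_{k,i}\}_{i=1}^{\tau}$, together with the independence of ${\bf{n}}_{k,i}$ and ${\bf{n}}_{k,j}$ across $i\neq j$. The identities I will invoke are standard: the circular-symmetry cancellations $\mathbb{E}[\alpha_{k,i}^2]=0$ and $\mathbb{E}[\alpha_{k,i}\alpha^*_{k,j}]=0$ for $i\neq j$ (the latter from pilot orthonormality), together with the vanishing of all odd moments of ${\bf{n}}_{k,i}$; the quadratic-form identities $\mathbb{E}[{\bf{x}}^{{\rm{H}}}M{\bf{x}}]={\rm{tr}}(M{\bf{R}})$ and ${\rm{Var}}({\bf{x}}^{{\rm{H}}}{\bf{x}})=\|{\bf{R}}\|_{\mathrm{F}}^2$ for ${\bf{x}}\sim\mathcal{CN}({\bf{0}},{\bf{R}})$; the fourth noise moment $\mathbb{E}[\|{\bf{n}}_{k,i}\|^4]=N_{\mathrm{b}}(N_{\mathrm{b}}+1)\sigma^4$; and the moments $\mathbb{E}[|\alpha_{k,i}|^2]=1/\tau$ and $\mathbb{E}[|\alpha_{k,i}|^4]=2/\tau^2$ that already underlie Lemma~\ref{lemma1}.

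The mean \eqref{expectation of vk} is immediate: every summand of $\tilde{v}_{k,i}$ that is linear in a zero-mean circular quantity integrates to zero, and $\Delta_{k,i}$ is centred by construction, so only the term $\tau p_t{\tilde{\bf{h}}}^{{\rm{H}}}_k{\tilde{\bf{h}}}_k$ inside $\tilde{v}_{k,1}$ and the term $\|{\bf{n}}_{k,i}\|^2$ in every $\tilde{v}_{k,i}$ survive, yielding $\tau p_t\,{\rm{tr}}({\tilde{\bf{R}}}_k){\bf{e}}_1+N_{\mathrm{b}}\sigma^2{\bf{1}}_{\tau}$ as stated.

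For the correlation matrix I partition the $\tau\times\tau$ array into the blocks indexed by $(1,1)$, $(1,j\geq 2)$, $(i\geq 2,1)$, and $(i,j\geq 2)$. In the bottom-right block, independence of ${\bf{n}}_{k,i}$ and ${\bf{n}}_{k,j}$ for $i\neq j$ together with $\mathbb{E}[\alpha_{k,i}\alpha^*_{k,j}]=0$ eliminates all cross-products except $\mathbb{E}[\|{\bf{n}}_{k,i}\|^2\|{\bf{n}}_{k,j}\|^2]=N_{\mathrm{b}}^2\sigma^4$ and $\mathbb{E}[\Delta_{k,i}\Delta_{k,j}]=q_k^2\|{\tilde{\bf{R}}}_{{\rm{JM}},k}\|_{\mathrm{F}}^2$ (via the variance of the centred Gaussian quadratic form ${\tilde{\bf{h}}}^{{\rm{JM}},{\rm{H}}}_k{\tilde{\bf{h}}}^{{\rm{JM}}}_k$), which sum to $\varrho_{2,2}$ and populate the block uniformly to leading order, producing the ${\bf{1}}{\bf{1}}^{{\rm{T}}}$ structure in the stated formula. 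The first-row block $\mathbb{E}[\tilde{v}_{k,1}\tilde{v}_{k,j}]$ for $j\geq 2$ picks up the same two contributions plus the factorised term $\mathbb{E}[\tau p_t{\tilde{\bf{h}}}^{{\rm{H}}}_k{\tilde{\bf{h}}}_k]\mathbb{E}[\|{\bf{n}}_{k,j}\|^2]=\tau p_tN_{\mathrm{b}}\sigma^2\,{\rm{tr}}({\tilde{\bf{R}}}_k)$, reassembling into $\varrho_{1,2}$.

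The main obstacle is the $(1,1)$ entry, because $\tilde{v}_{k,1}$ is a six-term sum whose square forces me to evaluate fifteen cross-products in addition to six diagonal squares. Most cross-products die by circular symmetry (any surviving product requires an even total power of $\alpha_{k,1}$ and of each entry of ${\bf{n}}_{k,1}$); the only non-vanishing one is $2\mathbb{E}[\tau p_t{\tilde{\bf{h}}}^{{\rm{H}}}_k{\tilde{\bf{h}}}_k\,\|{\bf{n}}_{k,1}\|^2]=2\tau p_tN_{\mathrm{b}}\sigma^2\,{\rm{tr}}({\tilde{\bf{R}}}_k)$, which combines with the square $\mathbb{E}[(2\sqrt{\tau p_t}\Re[{\bf{n}}^{{\rm{H}}}_{k,1}{\tilde{\bf{h}}}_k])^2]=2\tau p_t\sigma^2\,{\rm{tr}}({\tilde{\bf{R}}}_k)$ to form the $2\tau p_t(N_{\mathrm{b}}+1)\sigma^2\,{\rm{tr}}({\tilde{\bf{R}}}_k)$ contribution inside $\tilde{\varrho}$. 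The remaining squares yield $\tau^2p_t^2\Theta_k$ (via $\mathbb{E}[({\tilde{\bf{h}}}^{{\rm{H}}}_k{\tilde{\bf{h}}}_k)^2]=|{\rm{tr}}({\tilde{\bf{R}}}_k)|^2+\|{\tilde{\bf{R}}}_k\|_{\mathrm{F}}^2$), $2q_k^2\|{\tilde{\bf{R}}}_{{\rm{JM}},k}\|_{\mathrm{F}}^2$ (via $\mathbb{E}[|\alpha_{k,1}|^4]=2/\tau^2$), $2\tau p_tq_k|{\rm{tr}}({\tilde{\bf{R}}}_{{\rm{JM}},k}{\tilde{\bf{R}}}_k)|$ (from $\mathbb{E}[|{\tilde{\bf{h}}}^{{\rm{H}}}_k{\tilde{\bf{h}}}^{{\rm{JM}}}_k|^2]={\rm{tr}}({\tilde{\bf{R}}}_{{\rm{JM}},k}{\tilde{\bf{R}}}_k)$ over the two independent channels), $N_{\mathrm{b}}(N_{\mathrm{b}}+1)\sigma^4$, and $2q_k\sigma^2\,{\rm{tr}}({\tilde{\bf{R}}}_{{\rm{JM}},k})$. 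Collecting everything reproduces $\varrho_{1,1}$.
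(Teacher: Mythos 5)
Your proposal is correct and takes essentially the same route as the paper's Appendix C: a block-wise direct moment computation in which circular symmetry and the independence of ${\tilde{\bf{h}}}_k$, ${\tilde{\bf{h}}}^{\rm{JM}}_k$ and the noise kill all but the same surviving terms, with the Gaussian quadratic-form identity (the paper's Lemma \ref{lemma4}), the fourth noise moment $(N_{\rm{b}}+1)N_{\rm{b}}\sigma^4$, and the moments $\mathbb{E}[|\alpha_{k,i}|^2]=1/\tau$, $\mathbb{E}[|\alpha_{k,i}|^4]=2/\tau^2$ yielding exactly the paper's $\varrho_{1,1}$, $\varrho_{1,2}$ and $\varrho_{2,2}$. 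Your identification of the single surviving cross-product in the $(1,1)$ entry and your "uniform to leading order" treatment of the bottom-right block mirror the paper's own handling.
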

\begin{proof}
	See Appendix C.
\end{proof}

The proposed estimate of $|{\alpha}_{k,i}|^2$ can be obtained by using results in
Lemmas \ref{lemma1} and \ref{lemma2}, 
in  Eqs. \eqref{41}  \eqref{42} and then Eqs. \eqref{norm estimation} and \eqref{40}.

Furthermore, for the phase differences $\left\lbrace \theta_{k,i},i\in\vmathbb{t}\setminus \left\{ 2\right\}  \right\rbrace $ in Section \ref{jamming channel estimation}, we consider an estimation method based on the following bilinear form
\begin{equation}\begin{aligned} \label{angle diff estimation}
		\hat{\theta}_{k,i}=\angle(\bar{\mathbf{y}}_{k,2}^{\rm{H}}\mathbf{A}_{k,i}\bar{\mathbf{y}}_{k,i}), i\in\vmathbb{t} \setminus \left\{ 2\right\}  ,
\end{aligned}\end{equation}
where $\mathbf{A}_{k,i}\in \mathbb{C}^{N_{\rm{b}}\times N_{\rm{b}}}$ is a positive semi-definite matrix. 
From Eq. \eqref{projected signal}, Eq. \eqref{explicit form of angle estimation} can be obtained as shown at the bottom of the next page.
We consider  a heuristic design of $\mathbf{A}_{k,i}$ with low complexity: $\mathbf{A}_{k,i} = \frac{1}{{ N_{\rm{b}}}} {\bf{I}}_{ N_{\rm{b}}}$, $\forall  i\in\vmathbb{t}\setminus \left\{ 2\right\} $.
It can be shown by using the orthogonality among the beam channel and noise and  among noise vectors on different pilots that $\frac{\bar{\mathbf{y}}_{k,2}^{\rm{H}}\bar{\mathbf{y}}_{k,i} }{{ N_{\rm{b}}}}$  converges almost surely to $\tau q_k |\alpha_{k,2}\alpha_{k,i} | e^{j\theta_{k,i}}
       \frac{\left\|{\tilde{\bf{h}}}^{{\mathrm{JM}}}_{k} \right\|^2 }{{ N_{\rm{b}}}}$ when ${ N_{\rm{b}}}$ approaches infinity.

\begin{figure*}[!b]
	\begin{minipage}[b]{0.5\textwidth}
		\hspace{-0.4cm}
		\includegraphics[scale=0.7]{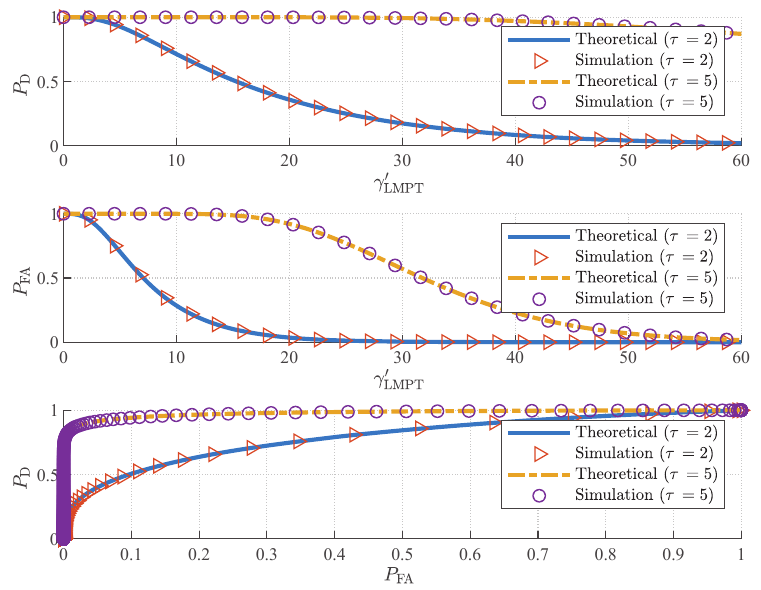}
		\caption{Theoretical and simulation performance of the proposed 
			LMPT-based jamming detection scheme for $\tau = 2,5$.}
		\label{fig2}
	\end{minipage}
	\begin{minipage}[b]{0.5\textwidth}
		\hspace{-0.4cm}
		\includegraphics[scale=0.66]{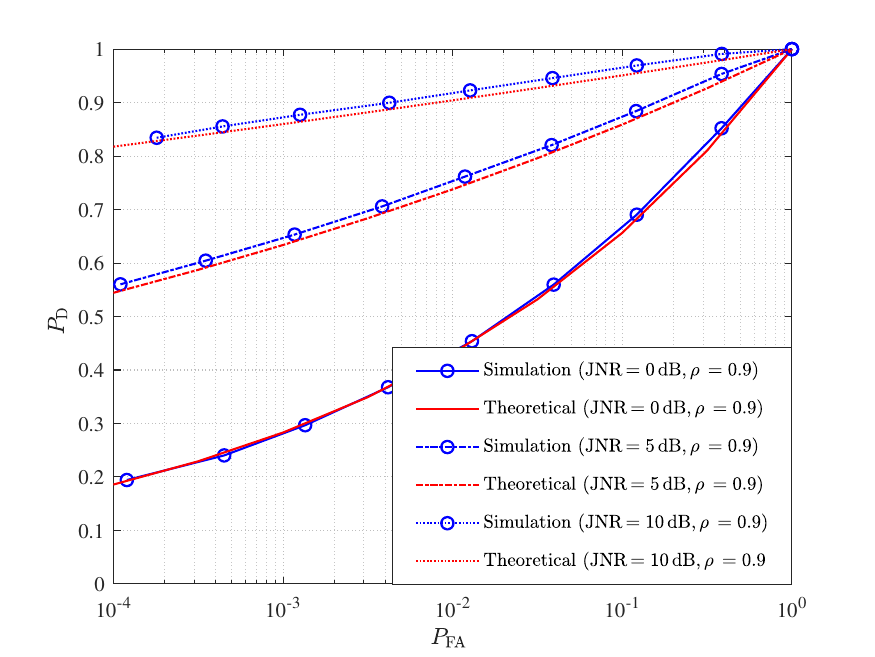}
		\caption{Theoretical and simulation ROC of the  LMPT-based jamming detection scheme for ${\rm{JNR}}=0$, $5$, $10\,{\rm{dB}}$.}
		\label{fig1}
	\end{minipage}
\end{figure*}

\begin{figure*}[hb]
	
	\newcounter{TempEqCnt4} 
	\setcounter{TempEqCnt4}{\value{equation}} 
	\setcounter{equation}{47} 
	\centering 
	\hrulefill 
	\begin{equation}\begin{aligned} \label{explicit form of angle estimation}
			\bar{\mathbf{y}}_{k,2}^{\rm{H}}\mathbf{A}_{k,i}\bar{\mathbf{y}}_{k,i} 
			&= \underbrace{ \tau q_k \alpha^{*}_{k,2}\alpha_{k,i} 
				{\tilde{\bf{h}}}^{{\mathrm{JM}},\mathrm{H}}_{k}\mathbf{A}_{k,i}{\tilde{\bf{h}}}^{{\mathrm{JM}}}_{k}
			}_{\rm{Desired\,\, Signal}}
			+ 
			\underbrace{ \tau q_k \alpha^{*}_{k,2}  {\tilde{\bf{h}}}^{{\mathrm{JM}},\mathrm{H}}_{k} \mathbf{A}_{k,i}{\mathbf{n}}_{k,i}  +
				\tau q_k \alpha^{*}_{k,i}   {\mathbf{n}}^{\rm{H}}_{k,2}\mathbf{A}_{k,i} {\tilde{\bf{h}}}^{{\mathrm{JM}}}_{k}}_{{\rm{Interference \,\, Signal}}} 
			+
			\underbrace{{\mathbf{n}}^{\rm{H}}_{k,2}\mathbf{A}_{k,i}{\mathbf{n}}_{k,i}}_{{\rm{Noise}}},
	\end{aligned}\end{equation}
	\vspace*{-2pt} 
\end{figure*}
\setcounter{equation}{\value{TempEqCnt4}}

So far, we can accomplish the estimation of $\bar{\boldsymbol{\alpha}}_{k} $ where the norm estimation is in Eq. \eqref{40} and the phase estimation is in Eq. \eqref{angle diff estimation}. What remains is the estimation of  the norm of the pilot's inner product $|\alpha_{k,1}|$. From Eq. \eqref{projected signal}, we can see that the projected observation vectors
$\left\lbrace {{{{\bar{\bf y}}}_{k,{i}}}},i\in\vmathbb{t}\right\rbrace $ contain information of $|\alpha_{k,1}|$ indirectly. From $|\alpha_{k,1}|^2 = 1- \sum_{i=2}^{\tau}|\alpha_{k,i}|^2$, $|\alpha_{k,1}|$ can be estimated to be
\setcounter{equation}{48}
\begin{equation}\begin{aligned} \label{alpha1 est1}
		\left| {{\grave{\alpha} _{k,1}}}\right | = \sqrt{ \left[ 1- \sum_{i\in \vmathbb{t}}\left| {{\hat\alpha _{k,i}}} \right |^2\right] ^{+}}.
\end{aligned}\end{equation}
Besides, 
the projected observation vector ${{{{\bar{\bf y}}}_{k,{1}}}}$ contain information of $|\alpha_{k,1}|$ directly. Specifically,  from Eq. \eqref{projected signal used}, we have
\begin{align}
		{\left\| {{{{\bar{\bf y}}}_{k,{1}}}} \right\|^2} 
		&\overset{(b)}{\approx} \tau {p_{t}}{\tilde{\bf{h}}}^{\rm{H}}_k{\tilde{\bf{h}}}_k
		+\tau {q_{k}}{\left| {{\alpha _{k,{1}}}} \right|^2}{\tilde{\bf{h}}}^{{\mathrm{JM}},\mathrm{H}}_{k}{\tilde{\bf{h}}}^{\mathrm{JM}}_{k} + {\bf{n}}_{k,{1}}^{\rm{H}}{{\bf{n}}_{k,{1}}},\nonumber\\
		&\overset{(c)}{\approx}\tau {p_{t}}{\rm{tr}}\left({\tilde{\bf{R}}} _k \right) + \tau {q_{k}}{\left| {{\alpha _{k,{1}}}} \right|^2}{\rm{tr}}\left( {\tilde{\bf{R}}}_{{\rm{JM}},k} \right) 
		+ {N_{{\rm{b}}}}\sigma^2,
\end{align}
where $(b)$ exploits the orthogonality of the beam channel to the noise, and $(c)$ utilizes the asymptotic property of massive MIMO \cite{akhlaghpasand2019jamming}. According to the above equation, another estimate of $|\alpha_{k,1}|$ is
\vspace{-1em}
\begin{equation}\begin{aligned} \label{alpha1 est2}
		\left| {{\acute\alpha _{k,1}}} \right | = 
		\sqrt {\frac{{{{\left[ {{{\left\| {{{{\bar{\bf y}}}_{k,1}}} \right\|}^2} - \tau {p_{t}}{\rm{tr}}\left({\tilde{\bf{R}}} _k \right) - {N_{{\rm{b}}}}\sigma^2} \right]}^ + }}}{{\tau {q_{k}}{\rm{tr}}\left( {\tilde{\bf{R}}}_{{\rm{JM}},k} \right)}}} .
\end{aligned}\end{equation}
To enhance the estimation accuracy of $|\alpha_{k,1}|$ by utilizing all received signals, we use a weighted combination of the two estimates Eqs. \eqref{alpha1 est1} and \eqref{alpha1 est2} to obtain
\begin{equation}\begin{aligned} \label{alpha1 est total}
		\left| {{\hat\alpha _{k,1}}} \right | = \epsilon  {{\grave{\alpha}  _{k,1}}} + \left ( 1-\epsilon   \right ) {{\acute{\alpha}  _{k,1}}}.
\end{aligned}\end{equation}
where $\epsilon\in\left[0,1 \right] $ denotes the weighting coefficient.

\section{Simulation and Discussion}
\label{Simulation}


\vspace{0.5em}
\subsection{Simulation Setup}
To evaluate the performance of the proposed channel-statistics-assisted jamming detection scheme and two-step channel estimation scheme, we consider a scenario where a BS equipped with $M_{\rm{BS}} = 64$ antennas serves $K = 5$ users each configured with $M_{\rm{UE}} = 16$ antennas, and a jammer configured with $M_{\rm{JM}} = 16$ antennas.
We normalize the noise variance to $ \sigma^2 = 1$, which implies that the signal-to-noise ratio (SNR) of the system is $p_t$.
Assume that the jammer allocates the same power in each subband, i.e., $q_k = q_t,k\in\mathbb{K}$, then the jamming-to-noise ratio (JNR) is $ q_t$. 
In the uplink beam training phase, analog beams of both the BS and users are selected sequentially from the standard discrete Fourier transform (DFT) codebook, denoted by  $\mathbf{F}=[\mathbf{f}_1,\ldots,\mathbf{f}_M]\in\mathbb{C}^{M\times M}$, where $M$ denotes the number of antennas on the device. The $q$-th element in the beam vector $\mathbf{f}_{n}$ is given by ${f}_{n,q}=\frac{1}{\sqrt{M}}\exp\left(\frac{\text{j}2\pi(n-1)(q-1)}{M}\right)$.
To model the channel correlation matrix, we adopt the exponential correlation model \cite{loyka2001channel,zhang2024interleaved}, commonly used in MIMO systems.
Specifically, the correlation between the $m$-th element and the $n$-th element of ${{\bf{h}}}^{{\rm{JM}}}_{k}$ in Eq. \eqref{general projection2} can be expressed as $\left[ {\bf{R}}_{{\rm{JM}},k}\right] _{m,n} = \mathbb{E}\left[{{{h}}}^{{\rm{JM}}}_{k,m}{{{h}}}^{{\rm{JM}},*}_{k,n} \right]  = \rho^{|m-n|}$, where  $\rho$, satisfing $0\le |\rho|\le 1$, is the correlation coefficient of adjacent antennas, and  ${{{h}}}^{{\rm{JM}}}_{k,m}$ denotes the equivalent channel coefficient from the jammer to the $m$-th antenna of the BS.
The matrices $ \mathbf{R}_{{\mathrm{UE},k}} $ and $\mathbf{R}_{{\mathrm{BS},k}}$, which are the covariance matrices of the user side and the BS side, are also modeled in the same way as $ {\bf{R}}_{{\rm{JM}},k}$.
Since ${{\bf{h}}}_{k} = {\rm{vec}}\left( {{\bf{H}}}_{k}\right) $, we have ${\bf{R}}_{k} = \mathbf{R}_{{\mathrm{UE},k}} \otimes \mathbf{R}_{{\mathrm{BS},k}}$ \cite{forenza2007simplified}.


\begin{figure}[!t]
	\vspace{-0.4cm}
	\hspace{-1cm}
	\includegraphics[scale=0.7]{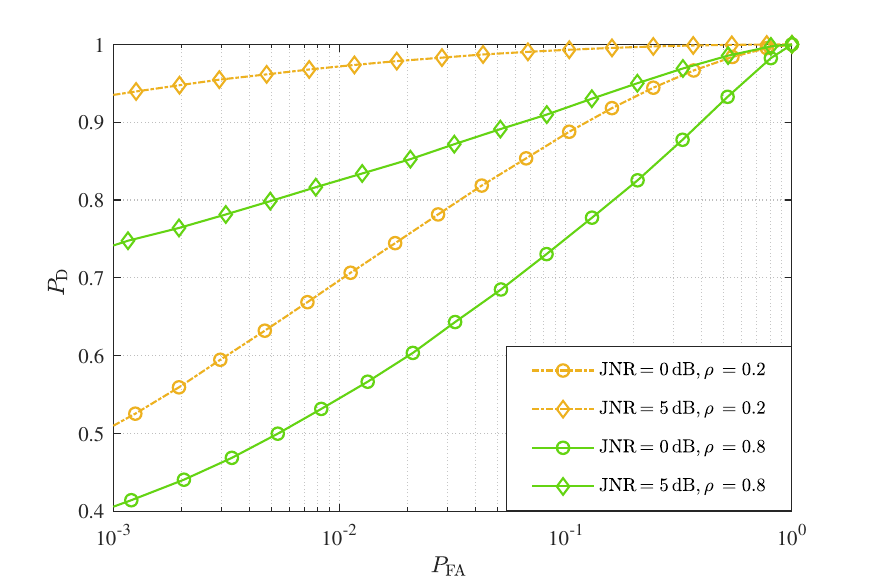}
	\vspace{-0.4cm}
	\caption{ROC of the proposed LMPT-based jamming detection scheme
		for $\rho=0.2$, $0.8$ and ${\rm{JNR}}=0$, $5\,{\rm{dB}}$.}
	\label{fig3}
	\vspace{-0.6cm}
\end{figure}

\begin{figure*}[!b]
	\vspace{-0.5cm}
	\hspace{-0.7cm}	
	\subfloat[Channel correlation $\rho = 0$]{
		\includegraphics[scale=0.45]{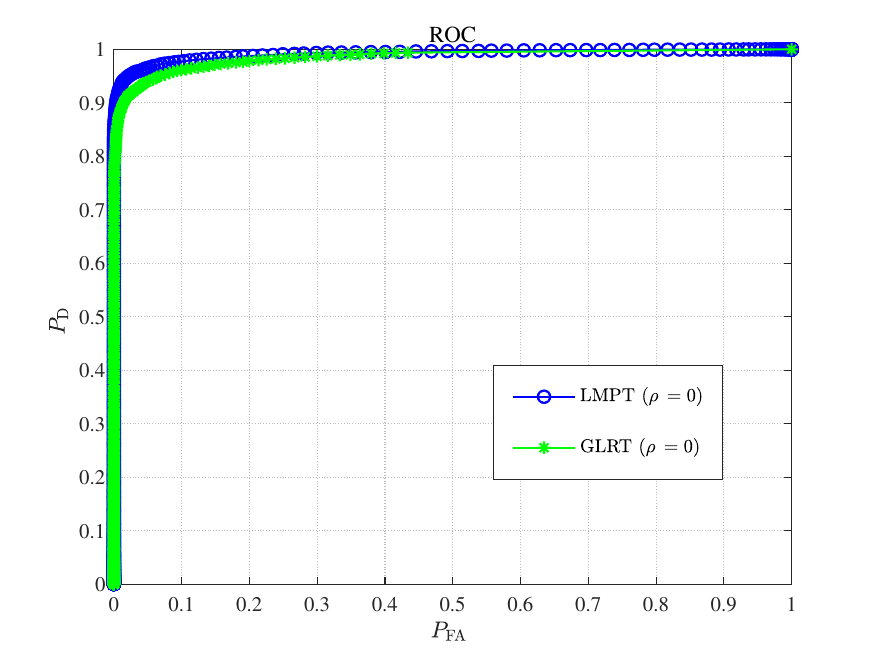}
		\label{fig4_1}}
	\hspace{-0.7cm}	
	\subfloat[Channel correlation $\rho = 0.5$]{
		\includegraphics[scale=0.45]{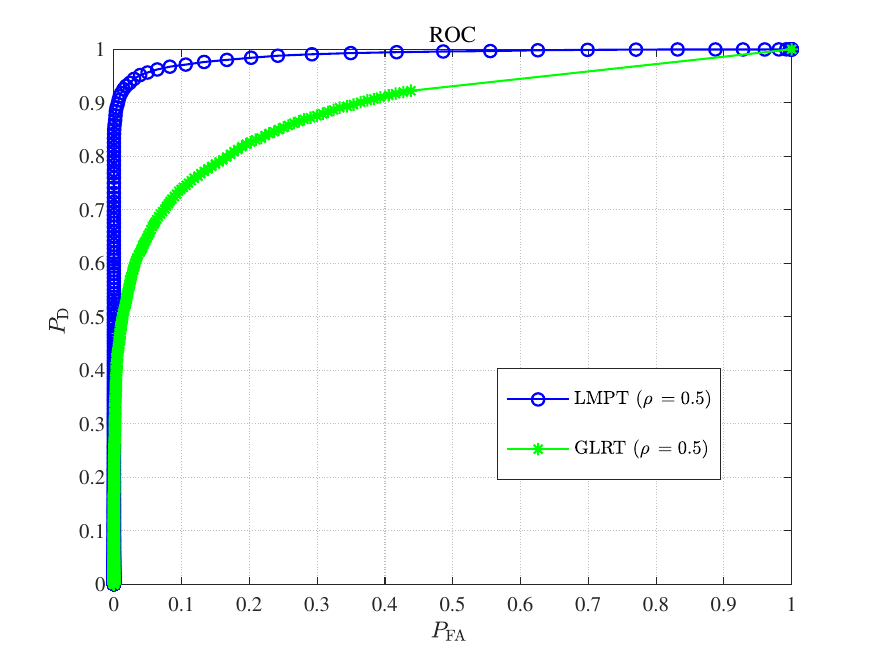}
		\label{fig4_2}}
	\hspace{-0.7cm}	
	\subfloat[Channel correlation $\rho = 1$]{
		\includegraphics[scale=0.45]{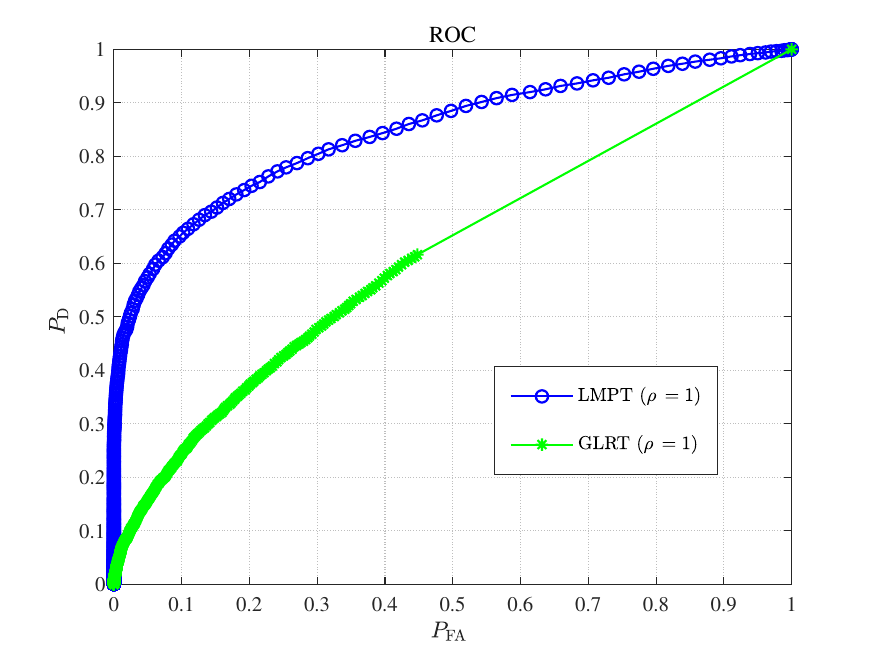}
		\label{fig4_3}}
	\caption{Comparison of ROC between the jamming detection schemes based on the LMPT and modified GLRT, where ${\text{JNR}} =2\,{\text{dB}}$ and $\tau = 5$.}
	\label{fig4}
\end{figure*}

\begin{figure*}[!t]
	\vspace{-1.2cm}
	\hspace{-0.5cm}
	\subfloat[MSE of the estimated inner-product's norm]{
		\includegraphics[scale=0.59]{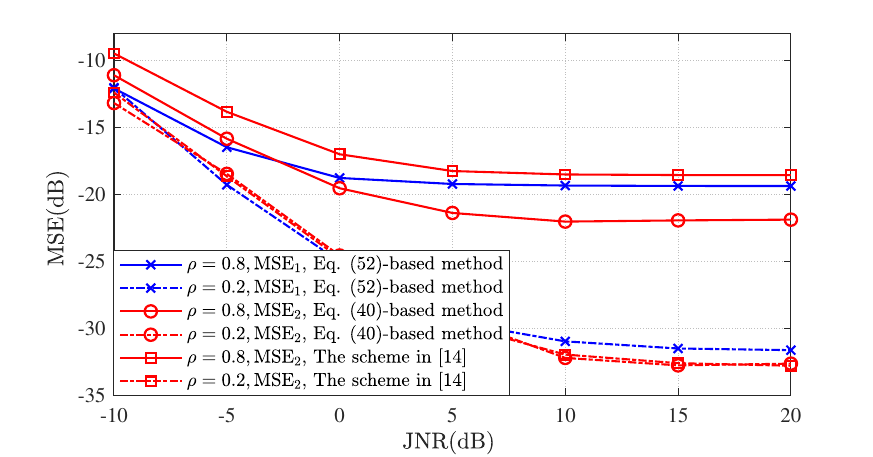}
		\label{fig6_1}}
	\hspace{0.06cm}
	\centering	
	\subfloat[CoS of the estimated inner-product's phase difference]{
		\includegraphics[scale=0.59]{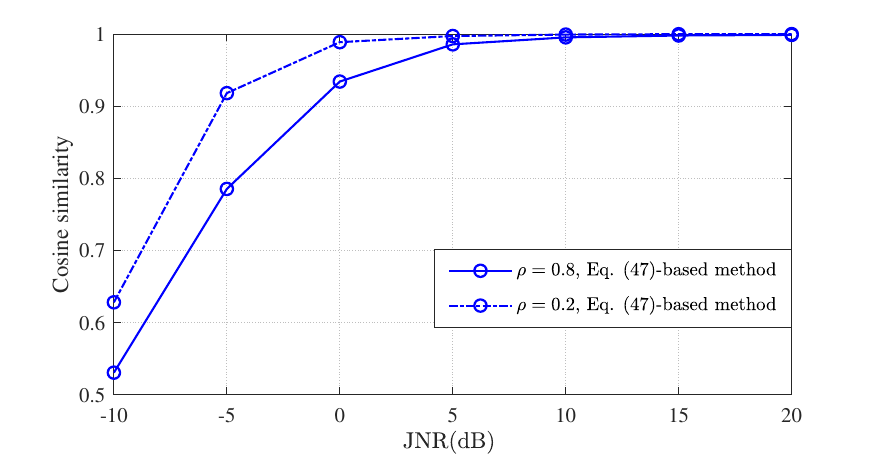}
		\label{fig6_2}}
	\caption{The performance of the proposed estimation scheme for the jamming pilot's inner-product.}
	\label{fig6}
	\vspace{-0.5cm}
\end{figure*}

\begin{figure*}[!t]
	\begin{minipage}[b]{0.5\textwidth}
		\includegraphics[scale=0.58]{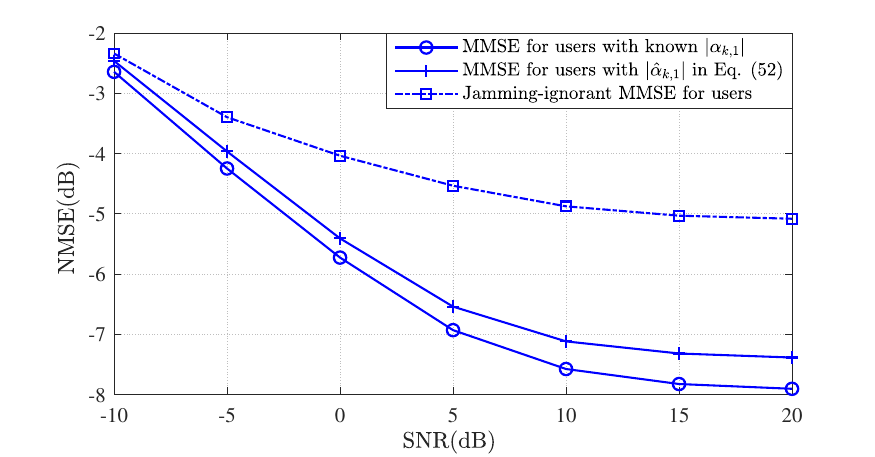}
		\caption{NMSE of the user channel estimation with known and estimated inner-product, where ${\text{JSR}} =0\,{\text{dB}}$, $N_{\rm{BS}} = 48$, $N_{\rm{UE}} = 12$ and $\tau = 4$.}
		\label{fig7_2}
	\end{minipage}
	\vspace{-0.1cm}
	\begin{minipage}[b]{0.5\textwidth}
		\vspace{0.05cm}
		\includegraphics[scale=0.575]{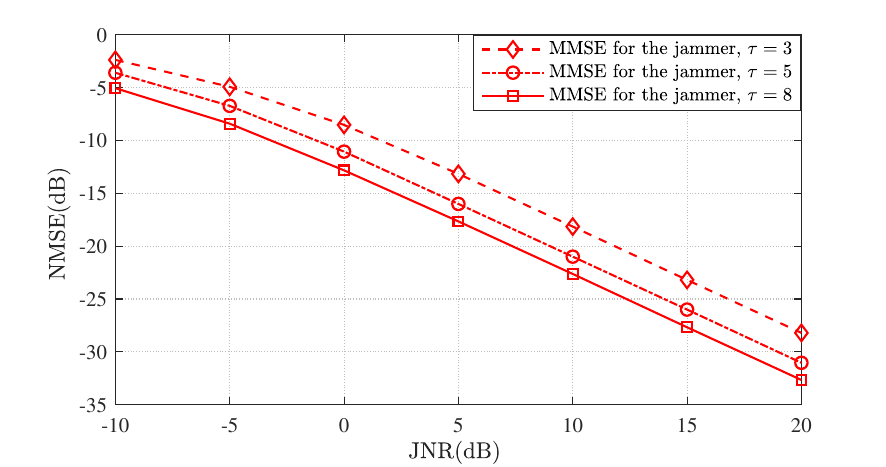}
		\caption{NMSE of the jamming channel estimation with known inner-product, where ${\text{JSR}} =0\,{\text{dB}}$, $N_{\rm{BS}} = 48$, $N_{\rm{UE}} = 12$ and $\tau = 3,5,8$.}
		\label{fig7_1}
	\end{minipage}
	\vspace{-0.5cm}
\end{figure*}

\vspace{0.5em}
\subsection{Performance of the Proposed Jamming Detection Schemes}
Fig. \ref{fig2} shows the simulation and theoretical results of the proposed LMPT-based jamming detection scheme for a system with $\rho = 0.9$, ${\text{JNR}} =0\,{\text{dB}}$, $N_{\rm{BS}}  = 4$, $N_{\rm{UE}} = 2$ and $\tau = 2$, $5$. 
The analytical expressions in Eqs. \eqref{PFA} and \eqref{PD} are in the forms  of infinite-sum-series.  In the simulation, a truncated  summation of the first $100$ terms is employed, along with a  normalization of $ a_{k,m}$’s or $\bar a_{k,m}$’s. 
The upper and middle figures show the curves of detection probability $P_{\rm{D}}$ and false-alarm probability $P_{\rm{FA}}$ versus $\gamma'_{\rm{LMPT}}$, respectively.
The upper two graphs demonstrate that the theoretical and simulation results match perfectly for different pilot lengths.
The lower figure shows  the $P_{\rm{D}}$ versus $P_{\rm{FA}}$ curve, also referred to as the receiver operating characteristic (ROC).
The curve with $\tau = 5$ rises faster in the low false-alarm region compared to the curve with $\tau = 2$, indicating superior performance. This phenomenon reveals that the increase of the pilot length and the joint utilization of multiple projected observation vectors can effectively improve the detection probability.
Fig. \ref{fig1} depicts the ROC cureve of the proposed LMPT-based detection scheme for various JNR values in the scenario in which $\tau = 2$, $\rho = 0.9$, $N_{\rm{BS}}  = 4$ and $N_{\rm{UE}} = 2$.
It can be seen that the simulation results are in good agreement  with the theoretical ones obtained in Theorem \ref{theorem2} under different JNR. 
The ROC curve rises with increasing JNR, indicating improved detection performance, which aligns with our expectations. The proposed LMPT-based detection  achieves 88.41\% detection probability with a false-alarm of 0.1 when ${\text{JNR}} =5\,{\text{dB}}$.

Fig. \ref{fig3} illustrates the ROC curves for different channel correlation levels $\rho$ for the LMPT-based  detection scheme under $\tau = 2$ and $\left( N_{\rm{BS}}, N_{\rm{UE}}\right) = (4,2)$.
It is observed that the detection performance decreases as the channel correlation $\rho$ increases.
Specifically, the change of $\rho$ from 0.2 to 0.8 at ${\text{JNR}} =0\,{\text{dB}}$, results in a 12.44\% decrease in $P_{\rm{D}}$ at $P_{\rm{FA}} = 0.1$.
Under strongly correlated channel conditions, the observations contain less information about the jammer, leading to worse detection performance.
A detection probability of 97.36\% at $0.01$ false-alarm can be achieved with ${\text{JNR}} =0\,{\text{dB}}$ and $\rho = 0.2$.
This indicates that as the channel approaches independent channel, the proposed scheme has high performance.



The GLRT detector under independent channel fading has been proposed in \cite{akhlaghpasand2017jamming}, and we use its modified form as a comparison scheme for the LMPT-based jamming detection. Let ${{\bf{Y}}}_{k}  = \left[{\bar{\bf{y}}}_{k,2},\ldots, {\bar{\bf{y}}}_{k,\tau}\right] $ be an ${N_{\rm{b}}} \times \left(\tau -1 \right) $ matrix, and denote the $n$-th column of ${{\bf{Y}}}^{\rm{H}}_{k}$ as ${\tilde{\bf{y}}}_{k,n} \in \mathbb{C}^{\left( \tau -1 \right) \times 1}$.
Then we can get the modified GLRT detector as
\begin{align}
	\vspace{-0.4cm}
	T_{\rm{GLRT}} =  {\rm{max}}\left( \frac{\sum_{n=1}^{N_{\rm{b}}}\left| {\tilde{\bf{y}}}_{k,n}^{\rm{H}} {\bf{1}}_{\tau -1}\right| }{{N_{\rm{b}}}\left(\tau-1 \right)^2 } - \frac{1}{\tau-1},0\right) \stackrel{\mathcal{H}_{1}}{\underset{\mathcal{H}_{0}}{\gtrless}}\gamma'_{\rm{GLRT}}.
	\vspace{-0.4cm}
\end{align}
A comparison of the ROC performance between the LMPT-based  detection scheme and the modified GLRT-based  scheme is presented in Fig. \ref{fig4}.
From Fig. \ref{fig4_1}, it can be observed that the LMPT-based scheme can achieve the same performance as the GLRT one when $\rho=0$.
Figs. \ref{fig4_2} and \ref{fig4_3} depict the detection performance 
for $\rho = 0.5$ and $\rho = 1$, respectively.
In both cases, the LMPT-based scheme outperforms the GLRT-based scheme as the LPMT-based detection scheme exploits the full eigen-space information. Specifically, under $\rho = 0.5$, the LMPT-based scheme can improve the detection probability  by 32.22\% at $P_{\rm{FA}} = 0.1$ compared to the GLRT scheme.

\subsection{Performance of the Proposed  Two-Step Estimation Scheme}
We first evaluate the performance of the estimation scheme of the jamming pilot's inner-product in the two-step estimation.
To measure the estimation accuracy of the inner-product's norm, we calculate the
MSE  of $\left| {{\hat\alpha _{k,1}}} \right |$ and $\left\lbrace \left| {{\hat\alpha _{k,i}}} \right |,i\in\vmathbb{t} \right\rbrace $ defined as
$
 	{\rm{MSE}}_1 = \frac{1}{K}\sum_{k\in\mathbb{K}}\left| | {{\hat\alpha _{k,1}}}| -  |   {{\alpha _{k,1}}}|\right| ^2, 
$
and
$
{\rm{MSE}}_{2} = \frac{1}{\left(\tau-1 \right) K}\sum_{k\in\mathbb{K}}\sum_{i\in\vmathbb{t}}\left| |{{\hat\alpha _{k,i}}}| - |{{\alpha _{k,i}}}| \right |^2.
$
The average MSEs of the inner-product's norm estimation scheme under different channel correlation levels, i.e., $\rho = 0.2,\,0.8$, are demonstrated in Fig. \ref{fig6_1}. The jamming-to-signal ratio (JSR) is $0\,{\rm{dB}}$, $N_{\rm{BS}} = M_{\rm{BS}}= 64$ and $\tau= 4$.
For the weighting coefficient in Eq. \eqref{alpha1 est total}, $\epsilon = 0.1$ is empirically chosen  for the simulation.
We adopt the norm estimation scheme based on the asymptotic property proposed in \cite{akhlaghpasand2019jamming} as the comparison scheme.
Under weakly correlated channel conditions, the performance using Eq. \eqref{40} is similar to that of the comparison scheme, while it has  better performance under strongly correlated channel conditions. Specifically, the average ${\rm{MSE}}_2$ reaches $-29.66\,{\rm{dB}}$ at $\rho=0.2$ and ${\rm{JNR}}=5\,{\rm{dB}}$. Under the same JNR conditions, the proposed estimate in Eq. \eqref{40} can improve the performance by $3.13\,{\rm{dB}}$ \textcolor{black}{compared to the scheme in \cite{akhlaghpasand2019jamming}} at $\rho=0.8$. 
In addition,
it can be seen that the performance trend of the average ${\rm{MSE}}_1$ is approximately the same as that of the average ${\rm{MSE}}_2$ for different levels of channel correlation.
As another part of the estimation performance of the inner-product, the cosine similarity (CoS) of the estimated phase difference  is defined as
$
	{\rm{CoS}} = \frac{1}{\left(\tau-2 \right)K }\sum_{k\in\mathbb{K}}\sum_{i\in\vmathbb{t}\setminus \left\lbrace 2\right\rbrace  }{\rm{cos}} \left(\hat{\theta}_{k,i} - {\theta}_{k,i} \right).
$
The CoS using the proposed  estimation in  Eq. \eqref{angle diff estimation} is illustrated in Fig. \ref{fig6_2}. It can be seen that at ${\rm{JNR}}=5\,{\rm{dB}}$, the average CoS can reach 98.73\% and 99.6\% under $\rho=0.2$ and $0.8$, respectively.
In conclusion, the proposed inner-product estimation scheme is robust under different channels and has better performance under independent channels.
On this basis, we further evaluate the performance of the proposed users' channel estimation with inner-product estimation. The normalized mean squared error (NMSE) of the users' channel estimations is defined as
$	{\rm{NMSE}}_{1} = \frac{1}{K}\sum_{k\in\mathbb{K}}\frac{\left \| {\hat{{\bf{h}}}_{k}} -{{{\bf{h}}}_{k}}  \right \|^2 }{\left \| {{{\bf{h}}}_{k}}  \right \|^2 }.$
Fig. \ref{fig7_2} shows the users' channel estimation performance with the known $\left| {{\alpha _{k,1}}} \right |$ and the  estimatied $\left| {{\hat \alpha _{k,1}}} \right |$ utilizing the scheme in Section \ref{estimation of pilot}. 
The users' channel estimation scheme aided by the proposed  estimation scheme of $\left| {{\alpha _{k,1}}} \right |$ realizes a performance close to that of the scheme with known $\left| {{\alpha _{k,1}}} \right |$.
The proposed users' channel estimation scheme can better capture the channel,
reducing NMSE by $2\,{\text{dB}}$ at ${\text{SNR}} =5\,{\text{dB}}$ compared to the jamming-ignorant MMSE channel estimation scheme.

Next, we analyze the effect of the pilot length on the performance of the proposed jamming channel estimation scheme. Note that the NMSE of the jamming channel is defined as ${\rm{NMSE}}_2 = \frac{1}{K}\sum_{k\in\mathbb{K}}\frac{\left \| {\hat{{\bf{h}}}_{k}^{{\rm{JM}}}} -{{{\bf{h}}}_{k}^{{\rm{JM}}}}  \right \|^2 }{\left \| {{{\bf{h}}}_{k}^{{\rm{JM}}}}  \right \|^2 }$.
For a system with ${\text{JSR}} =0\,{\text{dB}}$, $N_{\rm{BS}} = 48$, and $N_{\rm{UE}} = 12$,  the NMSE of the proposed channel estimation scheme with known inner-product versus the JNR for three pilot lengths is illustrated in Fig. \ref{fig7_1}. 
The increase in $\tau$ brings about a steady improvement in the channel estimation performance of the jammer. 

\begin{figure}[!htp]
	\includegraphics[scale=0.52]{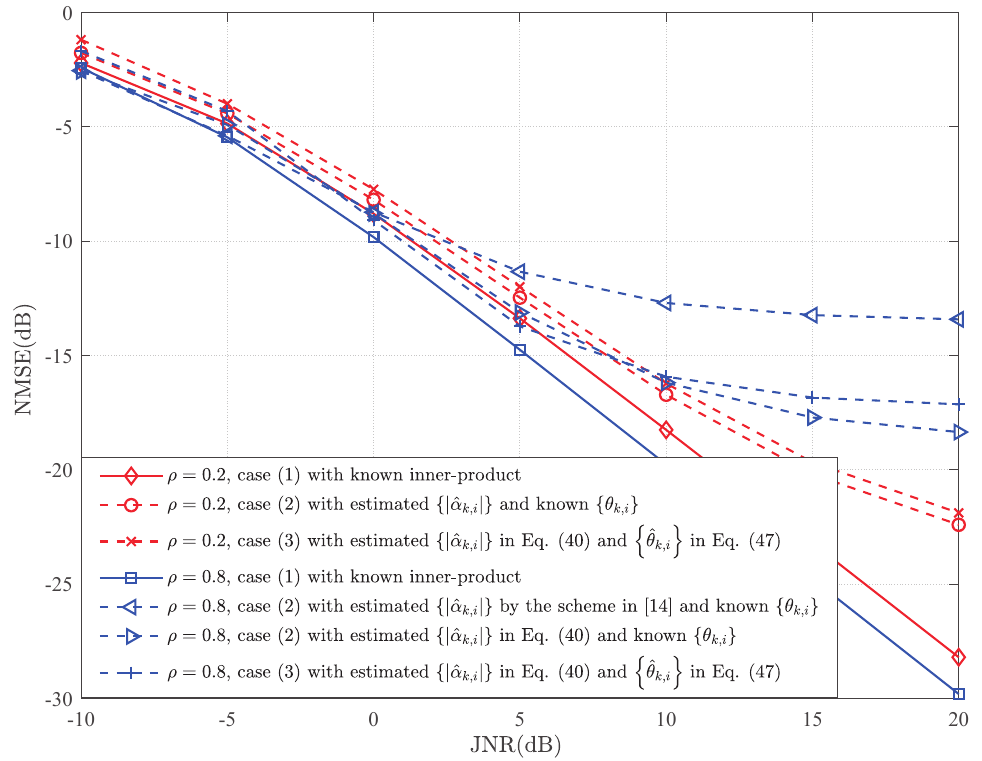}
	\caption{NMSE of the jamming channel  estimation scheme under different inner-product estimation strategies, where $N_{\rm{BS}} = 48$, $N_{\rm{UE}} = 12$ and $\tau = 4$.}
	\label{fig8}
	\vspace{-0.6cm}
\end{figure}
For the comparison of the jamming channel estimation performance with and without inner-product estimation, 
Fig. \ref{fig8} shows the NMSE for three cases, 
namely (1) with known inner-product $\bar{\boldsymbol{\alpha}}_{k}$, (2) with the estimated $\left\lbrace \left| {{\hat\alpha _{k,i}}} \right |,i\in\vmathbb{t} \right\rbrace $ and the known  $\left\lbrace {\theta}_{k,i},i\in\vmathbb{t}\setminus\left\lbrace  2\right\rbrace \right\rbrace $; (3) with the estimated ${\hat{\bar{{\boldsymbol{\alpha}}}}}_{k}$. 
For weakly correlated channels, i.e., $\rho = 0.2$, the performance is close in both cases (2) and (3). And, the difference between the performance in the above two cases and that  of the MMSE channel estimation with known inner-product at ${\text{JNR}} =10\,{\text{dB}}$ is $1.54\,{\text{dB}}$.
For strongly correlated channels, the performances under cases (2) and (3) are also close and outperform the estimation scheme in \cite{akhlaghpasand2019jamming}. This confirms the effectiveness of the proposed inner-product estimation strategy. Specifically, the multiple projected observation vectors based MMSE channel estimation  can achieve $-15.93\,{\text{dB}}$ NMSE at ${\text{JNR}} =10\,{\text{dB}}$ and $\rho = 0.8$, with the aid of the proposed  inner-product  estimation.

\begin{figure*}[hb]
	
	\newcounter{TempEqCnt5} 
	\setcounter{TempEqCnt5}{\value{equation}} 
	\setcounter{equation}{58} 
	\centering 
	\hrulefill 
	\vspace*{-2pt} 
	{\small{
			\begin{align} \label{four order}
				\mathbb{E}\left[|\alpha_{k,i}|^2|\alpha_{k,j}|^2\right] 
				& =  \underbrace{\mathbb{E}\left[ \left( \sum_{n=1}^{\tau}\left | {\phi}_{i,n}  \right |^2 \left |  {\psi}^{*}_{k,n} \right |^2\right)  \left( \sum_{n=1}^{\tau}\left | {\phi}_{j,n}  \right |^2 \left |  {\psi}^{*}_{k,n} \right |^2\right)  \right]}_{\Psi_1 } + \underbrace{\mathbb{E}\left[
					\left( \sum_{n=1}^{\tau}\left | {\phi}_{i,n}  \right |^2 \left |  {\psi}^{*}_{k,n} \right |^2\right) 
					\left( \sum_{n=1}^{\tau} \sum_{m \ne n} {\phi}^*_{j,n}{\phi}_{j,m}{\psi}_{k,n}{\psi}^{*}_{k,m}\right)
					\right] }_{\Psi_2 } \nonumber \\
				&\hspace{0.1cm} + \underbrace{\mathbb{E}\left[
					\left( \sum_{n=1}^{\tau} \sum_{m \ne n} {\phi}^*_{i,n}{\phi}_{i,m}{\psi}_{k,n}{\psi}^{*}_{k,m}\right) 
					\left( \sum_{n=1}^{\tau}\left | {\phi}_{j,n}  \right |^2 \left |  {\psi}^{*}_{k,n} \right |^2 \right)
					\right] }_{\Psi_3 } + \underbrace{\sum_{n=1}^{\tau} \sum_{m \ne n} {\phi}^*_{i,n}{\phi}_{i,m}{\phi}_{j,n}{\phi}^*_{j,m}\mathbb{E}\left[|{\psi}_{k,n}|^2\right] \mathbb{E}\left[|{\psi}^{*}_{k,m}|^2\right]}_{\Psi_4 }
				.
	\end{align} }}
	\setcounter{equation}{\value{TempEqCnt5}} 
	\vspace*{-1cm}
\end{figure*}

\vspace{-1em}
\section{Conclusion}
\label{Conclusion}


This paper investigated jamming detection and channel estimation in beamspace massive MIMO beam training under random active attacks by malicious jammers.
The hypothesis testing problem was formulated using multiple projected observation vectors  corresponding to the unused pilots obtained via pilot space projection.
The likelihood functions of these projected observation vectors were also derived.
With known second-order statistics of the channels, we proposed the jamming detection scheme based on LMPT.
We derived analytical expressions for the detection probability $P_{\rm{D}}$ and false alarm probability $P_{\rm{FA}}$ of the LMPT scheme.
For channel acquisition, we proposed a two-step MMSE-based channel estimation method. First, the unknown parameters introduced by the jammer's random pilots are estimated. Second, channel estimation is performed using the projected observation vectors and the estimated parameters.
Simulations verified our derivations and demonstrated how various system parameters, such as pilot length and channel correlation level, impact the performance of the proposed schemes.
The simulations demonstrated the superior performance of the proposed schemes compared to the baseline methods over highly correlated channels.


\appendix
\subsection{Proof of Theorem \ref{theorem1}}\label{Appendix}

Under ${\mathcal{{H}}}_0$, ${\bar{\bf{y}}}_{k,i }$ are i.i.d. $\mathcal{CN}({\bf{0}}, {{\bf{I}}}_{})$. We have from Eq. \eqref{LMPT detector}
\begin{align}
	T_{\rm{LMPT}}
	& = {\bar{\bf{y}}}_{k }^{\rm{H}}\left({\mathbf{I}}_{\tau-1} \otimes {\bf{V}}_k\right) \left({\mathbf{I}}_{\tau-1} \otimes {\boldsymbol{\Lambda}}_k\right) \left({\mathbf{I}}_{\tau-1} \otimes {\bf{V}}_k\right)^{\rm{H}}{\bar{\bf{y}}}_{k },\nonumber \\
	&=  {{{\bf{r}}}_{k }^{\rm{H}}\left({\mathbf{I}}_{\tau-1} \otimes {\boldsymbol{\Lambda}}_k\right){{\bf{r}}}_{k }}
	= \frac{1}{2}\sum_{n=1}^{\rho_k}  \lambda_{k,n}\sigma^2 x_{k,n} ,
\end{align}	
where ${{\bf{r}}}_{k } = [r_{k,1},\ldots,r_{k,\left(\tau - 1 \right) N_{\rm{b}}}]^{\rm{T}} = \left({\mathbf{I}}_{\tau-1} \otimes {\bf{V}}_k\right)^{\rm{H}}{\bar{\bf{y}}}_{k } $, and $x_{k,n}=\sum_{i:i\,{\rm{mod}}\,N_{\rm{b}}=n} \frac{|\sqrt{2}r_{k,i}|^2}{\sigma^2}$.
	It can be concluded that  ${{\bf{r}}}_{k } \sim\mathcal{CN}({\bf{0}}, \sigma^2{{\bf{I}}}_{})$ since
	$\left({\mathbf{I}}_{\tau-1} \otimes {\bf{V}}_k\right)$ is unitary and $ {\bar{\bf{y}}}_{k } \sim\mathcal{CN}({\bf{0}}, \sigma^2{{\bf{I}}}_{})$, and thus $\frac{|\sqrt{2}r_{k,i}|^2}{\sigma^2} \sim  \chi_2^2(0)$, $x_{k,n} \sim  \chi_{2(\tau-1)}^2(0)$.
We have shown that $T_{\rm{LMPT}}$ under $\mathcal{H}_0$ is distributed as a linear combination of independent chi-squared variables $x_{k,n}$ each having $2(\tau-1)$ degrees of freedom.

Under  $\mathcal{H}_1$, based on Eq. \eqref{LMPT detector}, the LMPT-based detector can be further represented as
\begin{align}
	T_{\rm{LMPT}}
	&= \sum_{i=2}^{\tau}\sum_{n=1}^{N_{\rm{b}}}
	{\lambda_{k,n}|{\bar{\bf{y}}}^{\rm{H}}_{k,i}  {{\bf{v}}}_{k,n} |^2}
	=\sum_{i=2}^{\tau}{{\bar{\bf{y}}}_{k,i }^{\rm{H}}{\tilde{\bf{R}}}_{{\rm{JM}},k}{\bar{\bf{y}}}_{k,i }}.
\end{align}
By converting the summation of the quadratic forms of length-$N_{\rm{b}}$ projected observation vectors $ \left\lbrace {\bar{\bf{y}}}_{k,i},i\in\vmathbb{t} \right\rbrace $ to the quadratic form of $  {\bar{\bf{y}}}_{k}$, the above equation can be transformed into
\begin{align} \label{proof a lmpt}
	T_{\rm{LMPT}}
	&=  {\bar{\bf{y}}}_{k }^{\rm{H}} \left({\mathbf{I}}_{\tau-1} \otimes {\tilde{\bf{R}}}_{{\rm{JM}},k}\right) {\bar{\bf{y}}}_{k} = {{\bar{\bf{r}}}_{k }^{\rm{H}}{\bf{B}}_k
		{\bar{\bf{r}}}_{k }},
\end{align}
where $ {\bar{\bf{r}}}_{k } = {{\bf{R}}}^{-1/2}_{{\bar{\bf{y}}}_{k }} {\bar{\bf{y}}}_{k}\sim\mathcal{CN}({\bf{0}}, {{\bf{I}}}_{})$ due to ${\bar{\bf{y}}}_{k}\sim\mathcal{CN}({\bf{0}}, {{\bf{R}}}_{{\bar{\bf{y}}}_{k }})$  and ${\bf{B}}_k $ is defined in Eq. \eqref{BK}.
Using the EVD  ${\bf{B}}_k  = \mathbf{	Q}^{\rm{H}}_k {\bf{D}}_k \mathbf{Q}_k$, where the $(i,i)$-th element of $ \mathbf{D}_k $ is $\epsilon_{k,i}$, $ T_{\rm{LMPT}}$ in Eq. \eqref{proof a lmpt} under ${\mathcal{{H}}}_1$ can be transformed into
\begin{align}
	T_{\rm{LMPT}}
	& = {\bar{\bf{r}}}^{\rm{H}}_{k } \mathbf{	Q}^{\rm{H}}_k {\bf{D}}_k \mathbf{	Q}_k{\bar{\bf{r}}}_{k } \xlongequal[]{{\bar{\bar{\bf{r}}}}_{k }=\mathbf{Q}_k{\bar{\bf{r}}}_{k }}
	 \sum_{n=1}^{\varphi_k}\frac{\epsilon_{k,n}}2\left|\sqrt{2}{\bar{\bar{{r}}}}_{k,n}\right|^2,  
\end{align}	
	where ${\bf{D}}_k$ and $\mathbf{Q}_k$ are the eigenvalue and eigenvector matrices.
  ${\bar{\bar{{r}}}}_{k,n}$ is the $n$-th element of ${\bar{\bar{\bf{r}}}}_{k }$ and ${\bar{\bar{{r}}}}_{k,n}\sim\mathcal{CN}(0, 1)$. It's easy to see that $T_{\rm{LMPT}}$ under ${\mathcal{H}}_1$ is a linear
combination of $\varphi_k$ independent chi-squared variables each having $2$ degrees of freedom. 
From the result in \cite[Eq. (2.4)]{provost1996exact}, the PDFs of $T_{\rm{LMPT}}$ under $\mathcal{H}_0$ and $\mathcal{H}_1$ can be written as Eqs. \eqref{PDF h0} and \eqref{PDF h1}.

\subsection{Proof of Lemma \ref{lemma1}}\label{Appendix B}

Recall that $	\alpha_{k,i} = {\boldsymbol{\phi}}^{\rm{T}}_{i}{\boldsymbol{ \psi}}^{*}_{k}
=\sum_{n=1}^{\tau} {\phi}_{i,n} {\psi}^{*}_{k,n}$, and ${\psi}_{k,n} \sim \mathcal{CN}\left(0,\frac{1}{\tau} \right) $. Thus, we get $\mathbb{E}\left[ |{\psi}_{k,n}|^2\right] =\frac{1}{\tau}$ and $\mathbb{E}\left[ |{\psi}_{k,n}|^4\right] =\frac{2}{\tau^2}$.
Then, the $(i-1)$-th element of $\mathbb{E}\left[ {\bf{x}}_{k} \right]$ can be obtained as
{{
\begin{align}
	&\mathbb{E}\left[|\alpha_{k,i}|^2\right]\nonumber\\
	&=\mathbb{E}\left[\sum_{n=1}^{\tau}\left | {\phi}_{i,n}  \right |^2 \left |  {\psi}^{*}_{k,n} \right |^2
	+\sum_{n=1}^{\tau} \sum_{m \ne n} {\phi}^*_{i,n}{\phi}_{i,m}{\psi}_{k,n}{\psi}^{*}_{k,m}
	\right].
\end{align} }}
Therefore, we can obtain the expectation of $ {\bf{x}}_{k} $ as Eq. \eqref{expectation of xk}.

Next, we analyze $\mathbb{E}\left[ {\bf{x}}_{k}{\bf{x}}^{\rm{T}}_k \right]$. For the $(i-1, j-1)$-th element of $\mathbb{E}\left[ {\bf{x}}_{k}{\bf{x}}^{\rm{T}}_k \right]$, we have Eq. \eqref{four order} as shown at the bottom of the previous page.
For the first term, we have
\setcounter{equation}{59}
\begin{align}
	\Psi_1
	=\,&\sum_{n=1}^{\tau}\left | {\phi}_{i,n}  \right |^2\left | {\phi}_{j,n}  \right |^2\mathbb{E}\left[ \left |  {\psi}^{*}_{k,n} \right |^4
	\right] \nonumber \\
	&+ \sum_{n=1}^{\tau}\sum_{m\ne n} \left| {\phi}_{i,n}  \right |^2\left | {\phi}_{j,m}  \right |^2 \mathbb{E}\left[\left |  {\psi}^{*}_{k,n} \right |^2\right]  \mathbb{E}\left[\left |  {\psi}^{*}_{k,m} \right |^2  \right] 
	\nonumber \\
	=\,&\frac{2}{\tau^2}\sum_{n=1}^{\tau}\left | {\phi}_{i,n}  \right |^2\left | {\phi}_{j,n}  \right |^2 + \frac{1}{\tau^2}\sum_{n=1}^{\tau}\sum_{m\ne n} \left| {\phi}_{i,n}  \right |^2\left | {\phi}_{j,m}  \right |^2.
	\nonumber \\
	=\,&\frac{1}{\tau^2}\sum_{n=1}^{\tau}\left | {\phi}_{i,n}  \right |^2\left | {\phi}_{j,n}  \right |^2 + \frac{1}{\tau^2}.
\end{align}
The second term can be expressed as
\begin{align}
	\Psi_2&=\sum_{n=1}^{\tau} \sum_{m \ne n} \left | {\phi}_{i,n}  \right |^2{\phi}^*_{j,n}{\phi}_{j,m} \mathbb{E}\left[\left |  {\psi}^{*}_{k,n} \right |^2{\psi}_{k,n}\right] \mathbb{E}\left[{\psi}^{*}_{k,m}\right] \nonumber \\
	&\hspace{0cm} 
	+ \sum_{t=1}^{\tau} \sum_{n \ne t} \sum_{m \ne n} \left | {\phi}_{i,t}  \right |^2{\phi}^*_{j,n}{\phi}_{j,m} \mathbb{E}\left[\left |  {\psi}^{*}_{k,t} \right |^2\right]\mathbb{E}\left[ {\psi}_{k,n}\right] \mathbb{E}\left[{\psi}^{*}_{k,m}\right] 
	\nonumber \\
	&= 0.
\end{align}
Similarly, $\Psi_3 = 0$. Finally, for the fourth term, we get the following expression as
\begin{align}
	\Psi_4
	=\frac{1}{\tau^2}\sum_{n=1}^{\tau} \sum_{m \ne n} {\phi}^*_{i,n}{\phi}_{i,m}{\phi}_{j,n}{\phi}^*_{j,m}.
\end{align}
Taking the above analysis together, $\mathbb{E}\left[|\alpha_{k,i}|^2|\alpha_{k,j}|^2\right]$ can be expressed as
\begin{align}
	\mathbb{E}\left[|\alpha_{k,i}|^2|\alpha_{k,j}|^2\right] 
	& = \frac{1}{\tau^2}\left({\rm{tr}}\left({\boldsymbol{\phi}}_{i}{\boldsymbol{\phi}}^{\rm{H}}_{i}{\boldsymbol{\phi}}_{j}{\boldsymbol{\phi}}^{\rm{H}}_{j} \right)  +1\right) .
\end{align}
Up to this point, we can obtain the expression for  $\mathbb{E}\left[ {\bf{x}}_{k}{\bf{x}}^{\rm{T}}_k \right]$ as shown in Eq. \eqref{covariace of xk}.

\subsection{Proof of Lemma \ref{lemma2}}\label{Appendix C}
For the expectation of the noise term $\tilde{\bf{v}}_{k}$, we can easily get Eq. \eqref{expectation of vk}  from the definition of $\tilde{\bf{v}}_{k}$.
Further, we analyze $\mathbb{E}\left[\tilde{\bf{v}}_{k} \tilde{\bf{v}}^{\rm{T}}_{k} \right]$. First, define ${\bar{\tilde{\bf{v}}}}_{k} = [{\tilde{{v}}}_{k,2},\ldots,{\tilde{{v}}}_{k,\tau}]^{\rm{T}}$, $\tilde{\bf{v}}_{k} = [{\tilde{{v}}}_{k,1},{\bar{\tilde{\bf{v}}}}_{k}]^{\rm{T}}$, then we have
\begin{align}
	\mathbb{E}\left[\tilde{\bf{v}}_{k} \tilde{\bf{v}}_{k}^{\rm{T}} \right] = \begin{pmatrix}
		\mathbb{E}\left[{\tilde{{v}}}_{k,1} {\tilde{{v}}}_{k,1} \right]  &
		\mathbb{E}\left[{\tilde{{v}}}_{k,1} {\bar{\tilde{\bf{v}}}}_{k}^{\rm{T}} \right] \\
		\mathbb{E}\left[{\bar{\tilde{\bf{v}}}}_{k} {\tilde{{v}}}_{k,1} \right] 
		&\mathbb{E}\left[{\bar{\tilde{\bf{v}}}}_{k} {\bar{\tilde{\bf{v}}}}_{k}^{\rm{T}} \right] 
	\end{pmatrix}.
\end{align}

For the $(1,1)$-th block of the matrix $\mathbb{E}\left[\tilde{\bf{v}}_{k} \tilde{\bf{v}}_{k} \right]$, we have
{\small{
\begin{align} \label{noise covariance 1st item}
	&\mathbb{E}\left[{\tilde{{v}}}_{k,1} {\tilde{{v}}}_{k,1} \right] = \Phi_k  \nonumber \\
	& + \tau^2 {p_{t}^2}\mathbb{E}\left[
	\left| {\tilde{\bf{h}}}^{\rm{H}}_k{\tilde{\bf{h}}}_k\right| ^2
	\right]
	+
	2\tau {p_{t}}\mathbb{E}\left[{\tilde{\bf{h}}}^{\rm{H}}_k{\tilde{\bf{h}}}_k
	\Delta_{k,1} \right]  + \mathbb{E}\left[\left| \mathbf{n}^{\rm{H}}_{k,1}\mathbf{n}_{k,1}\right| ^2 \right]
	\nonumber
	\\&+ 4\tau^{3/2} {p^{3/2}_{t}}\mathbb{E}\left[{\tilde{\bf{h}}}^{\rm{H}}_k{\tilde{\bf{h}}}_k
	\Re \left[ \mathbf{n}^{\rm{H}}_{k,1} {\tilde{\bf{h}}}_k  \right]  \right]
	+2\tau {p_{t}}\mathbb{E}\left[{\tilde{\bf{h}}}^{\rm{H}}_k{\tilde{\bf{h}}}_k\right] 
	\mathbb{E}\left[\mathbf{n}^{\rm{H}}_{k,1}\mathbf{n}_{k,1}\right]
	\nonumber
	\\&+  \mathbb{E}\left[ \Delta^2_{k,1}   \right]
	+ 4\sqrt{\tau p_t} \mathbb{E}\left[  \Delta_{k,1}  \Re \left[ \mathbf{n}^{\rm{H}}_{k,1} {\tilde{\bf{h}}}_k \right] \right]
	+ 2 \mathbb{E}\left[ \Delta_{k,1} \mathbf{n}^{\rm{H}}_{k,1}\mathbf{n}_{k,1}  \right] \nonumber
	\\&+ 4\tau p_t\mathbb{E}\left[ \left(  \Re \left[ \mathbf{n}^{\rm{H}}_{k,1} {\tilde{\bf{h}}}_k   \right] \right)^2  \right]
	+4\sqrt{\tau p_t} \mathbb{E}\left[ \Re \left[ \mathbf{n}^{\rm{H}}_{k,1} {\tilde{\bf{h}}}_k \right] \mathbf{n}^{\rm{H}}_{k,1}\mathbf{n}_{k,1}  \right] .
\end{align}}}where $\Phi_k$
 denotes the expectation of all terms containing $\alpha_{k,1}$.
Based on the independence of ${\tilde{\bf{h}}}_k$, ${\tilde{\bf{h}}}^{\mathrm{JM}}_{k}$ and $\mathbf{n}_{k,1}$, we get
{\small{
\begin{align}
	\mathbb{E}\left[{\tilde{\bf{h}}}^{\rm{H}}_k{\tilde{\bf{h}}}_k
	\Delta_{k,1} \right]  = 0,\,
	\mathbb{E}\left[  \Delta_{k,1}  \Re \left[ \mathbf{n}^{\rm{H}}_{k,1} {\tilde{\bf{h}}}_k \right] \right] = 0,\,
	\mathbb{E}\left[ \Delta_{k,1} \mathbf{n}^{\rm{H}}_{k,1}\mathbf{n}_{k,1}  \right] = 0. 
\end{align}}}And for some cross terms, we can get
\begin{align}
	&\mathbb{E}\left[{\tilde{\bf{h}}}^{\rm{H}}_k{\tilde{\bf{h}}}_k
	\Re \left[ \mathbf{n}^{\rm{H}}_{k,1} {\tilde{\bf{h}}}_k  \right]  \right] = 0, \\
	&\mathbb{E}\left[ \Re \left[ \mathbf{n}^{\rm{H}}_{k,1} {\tilde{\bf{h}}}_k \right] \mathbf{n}^{\rm{H}}_{k,1}\mathbf{n}_{k,1}  \right]  = 0.
\end{align}
In order to solve for some particular terms, we need to use the
following lemma.

\begin{lemma} \label{lemma4}
	Let ${\tilde{\bf{h}}}$ be an $M \times 1$ complex
	Gaussian vector, and ${\tilde{\bf{h}}}\sim \mathcal{CN}\left( {\bf{0}}, {\tilde{\bf{R}}}\right) $. Then 
	$
		\mathbb{E}\left[
		\left| {\tilde{\bf{h}}}^{\rm{H}}{\tilde{\bf{h}}}\right| ^2
		\right] =  \left|{\rm{tr}}\left({\tilde{\bf{R}}}  \right)  \right| ^2 +\left\|{\tilde{\bf{R}}} \right\|_{\rm{F}}^2  .
	$
\end{lemma}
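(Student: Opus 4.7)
The plan is to exploit the fact that $\tilde{\mathbf{h}}^{\mathrm{H}}\tilde{\mathbf{h}}=\|\tilde{\mathbf{h}}\|^{2}$ is a real nonnegative scalar, so $|\tilde{\mathbf{h}}^{\mathrm{H}}\tilde{\mathbf{h}}|^{2}=(\tilde{\mathbf{h}}^{\mathrm{H}}\tilde{\mathbf{h}})^{2}$, and then reduce the second moment of this quadratic form to moments of i.i.d.\ standard complex Gaussians via a whitening transformation. Concretely, I would write $\tilde{\mathbf{h}}=\tilde{\mathbf{R}}^{1/2}\mathbf{z}$ with $\mathbf{z}\sim\mathcal{CN}(\mathbf{0},\mathbf{I}_{M})$, so that $\tilde{\mathbf{h}}^{\mathrm{H}}\tilde{\mathbf{h}}=\mathbf{z}^{\mathrm{H}}\tilde{\mathbf{R}}\mathbf{z}$, and then use the eigendecomposition $\tilde{\mathbf{R}}=\mathbf{U}\boldsymbol{\Lambda}\mathbf{U}^{\mathrm{H}}$ with $\boldsymbol{\Lambda}=\mathrm{diag}(\lambda_{1},\ldots,\lambda_{M})$. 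Since $\mathbf{U}$ is unitary, $\mathbf{w}=\mathbf{U}^{\mathrm{H}}\mathbf{z}$ is again $\mathcal{CN}(\mathbf{0},\mathbf{I}_{M})$, so the quadratic form collapses to $\mathbf{z}^{\mathrm{H}}\tilde{\mathbf{R}}\mathbf{z}=\sum_{i=1}^{M}\lambda_{i}|w_{i}|^{2}$, a weighted sum of i.i.d.\ unit-mean exponentials.

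Squaring and taking expectation then yields
\begin{equation*}
\mathbb{E}\left[(\mathbf{z}^{\mathrm{H}}\tilde{\mathbf{R}}\mathbf{z})^{2}\right]=\sum_{i=1}^{M}\lambda_{i}^{2}\,\mathbb{E}[|w_{i}|^{4}]+\sum_{i\neq j}\lambda_{i}\lambda_{j}\,\mathbb{E}[|w_{i}|^{2}]\,\mathbb{E}[|w_{j}|^{2}].
\end{equation*}
For the standard complex Gaussian one has $\mathbb{E}[|w_{i}|^{2}]=1$ and $\mathbb{E}[|w_{i}|^{4}]=2$ (the latter being the standard moment identity for a $\chi^{2}_{2}/2$ random variable, or obtainable directly from Isserlis' theorem for complex Gaussians). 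Substituting gives $\sum_{i}2\lambda_{i}^{2}+\sum_{i\neq j}\lambda_{i}\lambda_{j}=\sum_{i}\lambda_{i}^{2}+\bigl(\sum_{i}\lambda_{i}\bigr)^{2}$, which is exactly $\|\tilde{\mathbf{R}}\|_{\mathrm{F}}^{2}+|\mathrm{tr}(\tilde{\mathbf{R}})|^{2}$, proving the claim.

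This argument has no real obstacle; the only point that needs care is the fourth-moment identity $\mathbb{E}[|w_{i}|^{4}]=2$ for the circularly symmetric complex Gaussian (as opposed to $3$ for a real standard Gaussian), and the cross-term count which correctly combines with the diagonal term to produce the trace-squared piece. As a sanity check one may also derive the result directly from Wick's/Isserlis' theorem applied to $\mathbb{E}[\tilde h_{a}^{*}\tilde h_{a}\tilde h_{b}^{*}\tilde h_{b}]=R_{aa}R_{bb}+R_{ab}R_{ba}$ and summing over $a,b$, which reproduces the two terms $|\mathrm{tr}(\tilde{\mathbf{R}})|^{2}$ and $\|\tilde{\mathbf{R}}\|_{\mathrm{F}}^{2}$ in a single line; I would include this as a concise alternative verification.
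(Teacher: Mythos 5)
Your proof is correct and follows essentially the same route as the paper: whiten to i.i.d. standard complex Gaussians and invoke the moments $\mathbb{E}[|w_i|^2]=1$, $\mathbb{E}[|w_i|^4]=2$; your extra eigendecomposition of $\tilde{\mathbf{R}}$ merely streamlines the index bookkeeping that the paper carries out over the entries $r_{i,j}$, and your Isserlis-theorem remark is a valid one-line alternative. A small bonus of your version is that writing $\tilde{\mathbf{h}}=\tilde{\mathbf{R}}^{1/2}\mathbf{z}$ avoids the inverse square root $\tilde{\mathbf{R}}^{-1/2}$ used in the paper, so it also covers singular $\tilde{\mathbf{R}}$.
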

\begin{proof}
	Let ${{\bf{h}}} = {\tilde{\bf{R}}}^{-1/2}{\tilde{\bf{h}}}$, then ${{\bf{h}}} \sim \mathcal{CN}\left({\bf{0}},{\bf{I}}_{M} \right) $ and
	\begin{align}
		\vspace{-1cm}
		\left| {\tilde{\bf{h}}}^{\rm{H}} {\tilde{\bf{h}}} \right| ^2
		= 	\left| {{\bf{h}}}^{\rm{H}}{\tilde{\bf{R}}} {{\bf{h}}} \right| ^2
		  &=  \sum_{i,j,i',j'=1}^{M}r_{i,j}r^*_{i',j'}h_i^{*}h_jh_{i'}h^*_{j'},
	\end{align}
where $h_i$ is the $i$-th element of $ {{\bf{h}}}$ and $r_{i,j}$ is the $(i,j)$-th element of ${\tilde{\bf{R}}}$. From $\mathbb{E}\left[h_i \right] = 0 $, $\mathbb{E}\left[h_i^* h_j \right] = \delta_{i,j}$ and $\mathbb{E}\left[h_i h_j \right] = 0$, we have
\begin{align} \label{70}
	\mathbb{E}\left[ \left| {\tilde{\bf{h}}}^{\rm{H}} {\tilde{\bf{h}}} \right| ^2\right] 
	&= 	\mathbb{E}\left[ \sum_{i=1}^{M}\left| r_{i,i}\right| ^2\left| h_i\right| ^4 \right] + \mathbb{E}\left[ \sum_{i\ne j} \left| r_{i,j}\right| ^2 \left| h_i\right| ^2\left| h_j\right| ^2 \right] \nonumber \\
	&+ \mathbb{E}\left[ \sum_{i\ne i'} r_{i,i}r^*_{i',i'} \left| h_i\right| ^2\left| h_{i'}\right| ^2\right],
\end{align}
where the three terms on the right-hand-side of Eq. \eqref{70} corresponding to the cases: $i=i' =j=j'$, $i=i'\ne j =j'$, and $i=j \ne i' =j'$. 
Since $	\mathbb{E}\left[ \left| h_i\right| ^4 \right] =2$ and $	\mathbb{E}\left[ \left| h_i\right| ^2 \right] = 1$, we have
\begin{align} 
	\mathbb{E}\left[ \left| {\tilde{\bf{h}}}^{\rm{H}} {\tilde{\bf{h}}} \right| ^2\right] 
	&= 	2 \sum_{i}\left| r_{i,i}\right| ^2  +  \sum_{i\ne j} \left| r_{i,j}\right| ^2 +  \sum_{i\ne i'} r_{i,i}r^*_{i',i'} ,\nonumber \\
	& = \left|\sum_{i} r_{i,i}\right| ^2 +  \sum_{i, j} \left| r_{i,j}\right| ^2 , \nonumber \\
	& = \left|{\rm{tr}}\left({\tilde{\bf{R}}}  \right)  \right| ^2 +\left\|{\tilde{\bf{R}}} \right\|_{\rm{F}}^2 .
\end{align}
\end{proof}

Based on Lemma \ref{lemma4}, we can get
\begin{align}
	\mathbb{E}\left[
	\left| {\tilde{\bf{h}}}^{\rm{H}}_k{\tilde{\bf{h}}}_k\right| ^2
	\right] =    \Theta_{k},\,\mathbb{E}\left[ \Delta^2_{k,1}   \right] = 2q_k^2\left\|{\tilde{\bf{R}}}_{{\rm{JM}},k} \right\|_{\rm{F}}^2.
\end{align}
Further, $\mathbb{E}\left[ \left(  \Re \left[ \mathbf{n}^{\rm{H}}_{k,1} {\tilde{\bf{h}}}_k   \right] \right)^2  \right]$ can be represented as
\begin{align} \label{73}
	&\mathbb{E}\left[ \left(  \Re \left[ \mathbf{n}^{\rm{H}}_{k,1} {\tilde{\bf{h}}}_k   \right] \right)^2  \right] \nonumber \\ &= \frac{1}{2}
	\left( \Re \left[  \mathbb{E}\left[ \mathbf{n}^{\rm{H}}_{k,1} {\tilde{\bf{h}}}_k  \mathbf{n}^{\rm{H}}_{k,1} {\tilde{\bf{h}}}_k   \right]\right]  +   \Re \left[\mathbb{E}\left[ \mathbf{n}^{\rm{H}}_{k,1} {\tilde{\bf{h}}}_k  \left( \mathbf{n}^{\rm{H}}_{k,1} {\tilde{\bf{h}}}_k\right) ^*   \right] \right]  \right) ,\nonumber \\
	& =  \frac{1}{2}\left( 
	\Re \left[ \sum_{i} \mathbb{E}\left[ n_{k,1,i}^*n_{k,1,i}^*\right] 
	\mathbb{E}\left[ {\tilde h}_{k,i}^*{\tilde h}_{k,i}^*\right]   \right] 
	\right. \nonumber \\
	&\hspace{2cm} \left. 
	+ \Re \left[ \sum_{i} \mathbb{E}\left[ \left| n_{k,1,i}\right| ^2\right] 
	\mathbb{E}\left[ \left| {\tilde h}_{k,i}\right| ^2\right]   \right] 
	\right) , \nonumber  \\
	& = \frac{1}{2}\sigma^2{\rm{tr}}\left( {\tilde{\bf{R}}}_{k}\right)  
	,
\end{align}
where ${\tilde{h}}_{k,i}$ and ${n}_{k,1,i}$ denote the $i$-th element of ${\tilde{\bf{h}}}_k$ and $\mathbf{n}_{k,1}$, respectively. Similarly to the derivation of Eq. \eqref{73}, we can obtain
\begin{align}
	\Phi_k  = {2\tau p_t q_k}\left| {\rm{tr}}\left( {\tilde{\bf{R}}}_{{\rm{JM}},k} {\tilde{\bf{R}}}_{k} \right)\right| + {2q_k\sigma^2}{\rm{tr}}\left( {\tilde{\bf{R}}}_{{\rm{JM}},k}  \right).
\end{align}
Finally, for the noise term
$\mathbb{E}\left[\left| \mathbf{n}^{\rm{H}}_{k,1}\mathbf{n}_{k,1}\right| ^2 \right]$, we have
{\small
\begin{align}
	&\mathbb{E}\left[\left| \mathbf{n}^{\rm{H}}_{k,1}\mathbf{n}_{k,1}\right| ^2 \right]= \sum_{i=1}^{N_{\rm{b}}} \mathbb{E}\left[ |n_{k,1,i}|^4\right] 
	+ \sum_{i=1}^{N_{\rm{b}}}\sum_{j\ne i} \mathbb{E}\left[ |n_{k,1,i}|^2\right]\mathbb{E}\left[|{n}_{k,1,j}|^2\right] ,\nonumber \\
	& = 2N_{\rm{b}}\sigma^4 + \left( N_{\rm{b}} - 1\right) N_{\rm{b}}\sigma^4 =  \left( N_{\rm{b}} + 1\right) N_{\rm{b}}\sigma^4.
\end{align}}Combining the above results we can obtain the $(1,1)$-th block of the matrix $\mathbb{E}\left[\tilde{\bf{v}}_{k} \tilde{\bf{v}}_{k} \right]$ as
$
	\mathbb{E}\left[{\tilde{{v}}}_{k,1} {\tilde{{v}}}_{k,1} \right] =\varrho_{1,1}
$.

For the $(i-1)$-th element of the $(1,2)$-th block vector $\mathbb{E}\left[{\tilde{{v}}}_{k,1} {\bar{\tilde{\bf{v}}}}_{k}^{\rm{T}} \right]$, we have
\begin{align}
	\mathbb{E}\left[{\tilde{{v}}}_{k,1} {\tilde{{v}}}_{k,i} \right] 
	=& \,\tau {p_{t}}\mathbb{E}\left[{\tilde{\bf{h}}}^{\rm{H}}_k{\tilde{\bf{h}}}_k\right] 
	\mathbb{E}\left[\mathbf{n}^{\rm{H}}_{k,i}\mathbf{n}_{k,i}\right]
	+ \mathbb{E}\left[\mathbf{n}^{\rm{H}}_{k,1}\mathbf{n}_{k,1}\right]\nonumber \\
	&\times \mathbb{E}\left[\mathbf{n}^{\rm{H}}_{k,i}\mathbf{n}_{k,i}\right] +\mathbb{E}\left[ \Delta_{k,1}  \Delta_{k,i}  \right]= \varrho_{1,2}.
\end{align}

The $(i-1, j-1)$-th element of the $(2,2)$-th block matrix $\mathbb{E}\left[{\bar{\tilde{\bf{v}}}}_{k} {\bar{\tilde{\bf{v}}}}_{k}^{\rm{T}} \right]$ can be denoted as
\begin{align}
	\mathbb{E}\left[{\tilde{{v}}}_{k,i} {\tilde{{v}}}_{k,j} \right]
	&= \mathbb{E}\left[\mathbf{n}^{\rm{H}}_{k,i}\mathbf{n}_{k,i}\right]\mathbb{E}\left[\mathbf{n}^{\rm{H}}_{k,j}\mathbf{n}_{k,j}\right]
	+\mathbb{E}\left[ \Delta_{k,i}  \Delta_{k,j}  \right],\nonumber \\
	& = \varrho_{2,2}.
\end{align}
Based on the above analysis, $\mathbb{E}\left[\tilde{\bf{v}}_{k} \tilde{\bf{v}}^{\rm{T}}_{k} \right]$ can be obtained.

\ifCLASSOPTIONcaptionsoff
  \newpage
\fi

\bibliographystyle{IEEEtran}
\bibliography{ReferencesNewAbbr}


\end{document}